\newtheorem{thmlem}{Lemma}
\newtheorem{thmthm}{Theorem}
\newtheorem{thmasmp}{Assumption}
\newtheorem*{thmidasmp*}{Identifying assumptions}
\theoremstyle{definition}
\newtheorem*{proofsketch*}{Proof sketch}
\def\E{\mathbb{E}}
\def\V{\mathrm{Var}}
\newcommand\indep{\protect\mathpalette{\protect\independenT}{\perp}}
\def\independenT#1#2{\mathrel{\rlap{$#1#2$}\mkern2mu{#1#2}}}
\def\IPW{\widehat{\phi}^{\text{\;IPW}}}
\def\fracIPW{\widehat{\phi}^{\;\text{fracIPW}}}
\def\betaIPW{\widehat{\phi}^{\;\beta\text{-IPW}}}
\def\addIPW{\widehat{\phi}^{\text{addIPW}}}
\title{Qini Curve Estimation under Clustered Network Interference}
\author{\name Rickard Karlsson\thanks{Equal contribution} \email r.k.a.karlsson@tudelft.nl \\
      \addr Department of Computer Science\\
      Delft University of Technology
      \AND
      \name Bram van den Akker$^*$ \email bram.vandenakker@booking.com \\
      \addr Booking.com, Amsterdam
      \AND
      \name Felipe Moraes \email felipe.moraes@booking.com \\
      \addr Booking.com, Amsterdam
      \AND
      \name Hugo Manuel Proença \email hugo.proenca@booking.com \\
      \addr Booking.com, Amsterdam
      \AND 
      \name Jesse H. Krijthe \email j.h.krijthe@tudelft.nl \\
      \addr Department of Computer Science\\
      Delft University of Technology
      }
\begin{document}

\maketitle

\begin{abstract}
Qini curves are a widely used tool for assessing treatment policies under allocation constraints as they visualize the incremental gain of a new treatment policy versus the cost of its implementation. Standard Qini curve estimation assumes no interference between units: that is, that treating one unit does not influence the outcome of any other unit. In many real-life applications such as public policy or marketing, however, the presence of interference is common. Ignoring interference in these scenarios can lead to systematically biased Qini curves that over- or under-estimate a treatment policy's cost-effectiveness.  In this paper, we address the problem of Qini curve estimation under clustered network interference, where interfering units form independent clusters. We propose a formal description of the problem setting with an experimental study design under which we can account for clustered network interference. Within this framework, we describe three estimation strategies, each suited to different conditions, and provide guidance for selecting the most appropriate approach by highlighting the inherent bias-variance trade-offs. To complement our theoretical analysis, we introduce a marketplace simulator that replicates clustered network interference in a typical e-commerce environment, allowing us to evaluate and compare the proposed strategies in practice.
\end{abstract}

Understanding treatment effect heterogeneity -- the variation in individual responses to the same treatment within a population -- is central in shaping individualized treatment policies across various domains, including personalized medicine~\citep{kravitz2004evidence}, uplift modeling in marketing and e-commerce~\citep{goldenberg2020free}, and targeted subgroup interventions in public policy~\citep{brand2011impact}. In these scenarios, the same questions recurs: \textit{Who should we treat?} Sometimes, it is sufficient to identify individuals who respond positively to a treatment. However, when treatments involve monetary or practical costs, the challenge is to devise a cost-effective policy that targets those who benefit the most from the treatment while staying within a given budget for treatment allocation.

First introduced by~\citet{radcliffe2007using} in the marketing literature, Qini curves have become a widely used method for evaluating the cost-effectiveness of treatment policies. A Qini curve plots the incremental gain by treating units prioritized by a given treatment rule under varying allocation budgets. By comparing the Qini curves of different prioritization rules, practitioners can determine which rule most effectively identifies who responds well to treatment. However, reliable estimation of Qini curves depends on some key assumptions being met, one of which is the Stable Unit Treatment Value Assumption~\citep{rubin1980randomization}. This assumption implies that there is no treatment interference, meaning that treating one unit has no influence on the outcome of any other unit. 

Interference arises in a variety of contexts, from peer effects in social networks~\citep{manski2013identification,ogburn2020causal} to cannibalization effects on marketplace platforms ~\citep{holtz2024reducing}. One of the most common settings is so-called \textit{clustered network interference} where interfering units form independent clusters; here units interfere within, rather than between, clusters. While there exists an extensive body of literature on estimating treatment effects under clustered network interference, e.g.~\citet{sobel2006randomized,hudgens2008toward}, little attention has been given to the problem of estimating Qini curves in this setting. As we demonstrate in Figure~\ref{fig:example}, traditional methods for estimating Qini curves become significantly biased when interference is present. Since biased Qini curves lead to incorrect assessments of the cost-effectiveness of treatment  policies, this is an important yet unaddressed problem. For example, \citet{imai2023experimental} emphasize the need for methods that evaluate individualized treatment strategies under interference. Motivated by this gap, the central question we aim to answer in this paper is: \textit{How can we account for interference when estimating Qini curves, and how does this adjustment affect decision-making using Qini curves?}

\paragraph{Contributions} To address our research question, we first formulate the experimental study design and necessary identification conditions for estimating Qini curves under clustered network interference. Next, we describe three separate estimation strategies based on different assumptions of the underlying interference. Our theoretical analysis demonstrates that stronger assumptions on the interference can yield more efficient estimators, though at the potential cost of increased bias when those assumptions are violated. We explore these trade-offs by empirically comparing all methods using a simulated data-generating process designed to mimic a marketplace with different types of interference in the form of cannibalization among different vendors. Finally, based on our findings, we offer practical recommendations for estimating Qini curves in settings with clustered network interference.

\begin{figure}[t]
    \centering
    \includegraphics[width=0.45\linewidth]{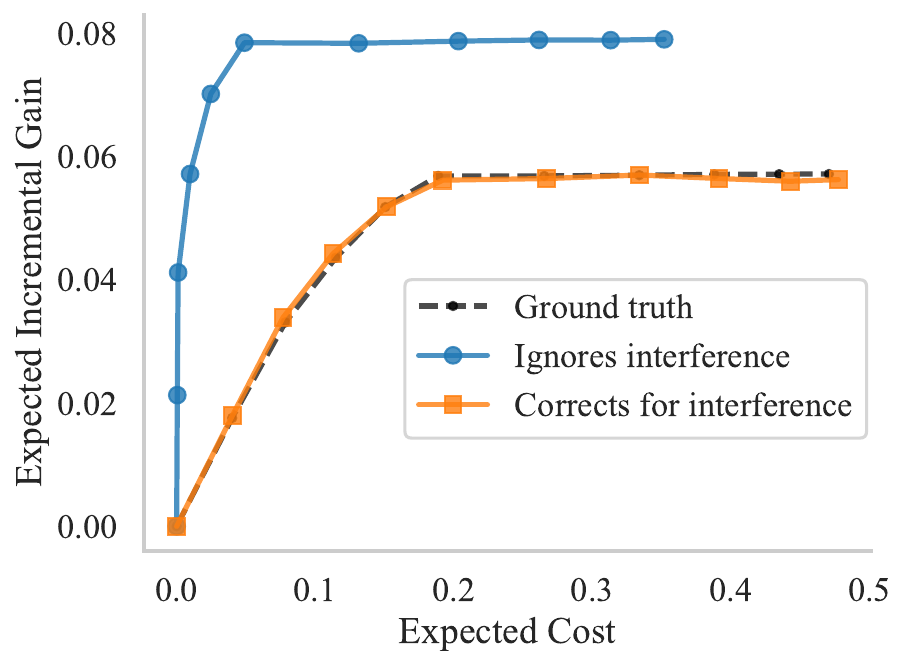}
    \caption{An illustrative simulation study with Qini curve estimation under clustered network interference. The black dashed line represents the true underlying Qini curve, while the solid lines depict two estimation approaches: one based on a traditional method that assumes no interference, and the other representing a proposed strategy in this paper that adjusts for interference using inverse probability weighting. More details on the simulation used to generate this figure can be found in Appendix~\ref{app:experiments}.}
    \label{fig:example}
\end{figure}

\section{Related works}

The task of estimating treatment effects becomes considerably more complex in the presence of interference. Hence, despite early influential works in causal inference such as~\citet{rubin1974estimating}, only recently has a substantial body of literature emerged to address interference. One of the most commonly studied settings is clustered network interference, where treatment units form independent clusters~\citep{sobel2006randomized,hudgens2008toward,tchetgen2012causal}, a condition also known as partial interference. Interference also naturally arises in network data, where some units are related to other units by being neighbors in e.g. a social network~\citep{ugander2013graph,eckles2017design}. In some cases, an experimental study design can be constructed in a way to detect and reduce bias from interference, for instance, through a two-stage randomization design~\citep{hudgens2008toward} or by stratified randomization across different blocks~\citep{bajari2021multiple}. Prior work has tackled specific tasks under interference, such as heterogeneous treatment effect estimation~\citep{zhao2024learning} and policy evaluation/learning~\citep{zhang2023individualized}. To our knowledge, however, no prior work has considered the problem of estimating Qini curves in the presence of interference.

Evaluating treatment prioritization rules using Qini curves in settings without interference has gained more attention in recent years~\citep{radcliffe2007using,rossler2022bridging,bokelmann2024improving}. The development of estimations procedures with better statistical inference guarantees has enabled the use of Qini curves in this context~\citep{yadlowsky2024evaluating}. While none of these works consider interference, \citet{sverdrup2025qini} considers the related problem of estimating Qini curves for combinatorial multi-armed~treatments. 

Combinatorial treatment problems and clustered network interference are inherently connected, as assigning treatments within a cluster can be seen as a combinatorial treatment decision. While causal inference methods exist for such settings, see e.g.~\citet{dasgupta2014factorial,goplerud2025estimatingheterogeneouscausaleffects}, they typically assume a fixed treatment dimension -- an assumption too restrictive when cluster sizes also may vary. However, the connection between these high-dimensional treatment settings and our setting underscores the challenge of estimating Qini curves under clustered network interference: as cluster size grows the combinatorial space of possible treatments expands exponentially, leading to a corresponding increase in interactions among units within the cluster. To address this challenge, we propose estimation strategies designed to more accurately estimate Qini curves, even as the cluster size differs or grows.

\section{Data structure \& assumptions}\label{sec:data_structure}

\paragraph{Notation}
We assume access to observations from a distribution $P$. We have clusters $i=1,\dots, N$ and each cluster contains the units $j=1,\dots, M_i$. A unit can be referred to by the tuple $(i,j)$. For each cluster $i$, we observe pre-treatment covariates $X_{i}$ in $\mathcal{X}\subseteq \mathbb{R}^{d_x}$. For each unit, we observe pre-treatment covariates $Z_{ij}$ in $\mathcal{Z}\subseteq\mathbb{R}^{d_z}$, a binary treatment $W_{ij}\in\{0,1\}$, and an outcome of interest $Y_{ij}$ in $\mathcal{Y}\subseteq \mathbb{R}$. The outcome may be binary or continuous. In addition, we also observe a non-negative cost of treatment $C_{ij}$ in $\mathcal{C}\subseteq[0,\infty)$. The cost $C_{ij}$ depends on both the treatment and outcome, and specifically we assume there to be no cost $C_{ij}=0$ when no treatment is given $W_{ij}=0$.
We consider cluster-level outcomes and costs which we define as $\tilde{Y}_i=\sum_{j=1}^{M_i}Y_{ij}$ and $\tilde{C}_i=\sum_{j=1}^{M_i}C_{ij}$. We also define the cluster-level treatment which is a binary
vector $\mathbf{W}_i=[W_{i1}, W_{i2}, \dots, W_{iM_i}] \in \{0,1\}^{M_i}$. At last, random variables are denoted by capital letters, while their instantiated values use lowercase. Probability densities are represented as $f(\cdot)$. 

\paragraph{Clustered network interference} In the setting of clustered network interference, we assume observations can be divided in independent clusters. Treating one unit belonging to cluster $i$ may influence the outcomes of other units from that same cluster. However, treating units from a different cluster $i'$ will not influence the outcomes of the units in cluster $i$. To define causal effects in this setting, we posit potential (counterfactual) outcomes $Y_{ij}(\mathbf{w})$ corresponding to the outcomes we would observe for an unit $(i,j)$ if the treatment vector $\mathbf{W}_i$ would be set to  $\mathbf{w}$~\citep{tchetgen2012causal}. Analogously, we define the counterfactual cost $C_{ij}(\mathbf{w})$ if $\mathbf{W}_i$ would be set to $\mathbf{w}$. 

\paragraph{Study design}
Throughout this paper, we consider an experimental study design where the unit-level treatments $W_{ij}$ are independently and randomly assigned. The treatment probability is determined by $e_w(x) = \Pr(W=w\mid X=x)$ which is known and the same for all units within a cluster. Following standard convention,  we refer to this probability as the propensity score~\citep{rosenbaum1983central}.  We assume the following conditions are fulfilled by our experimental study design.
\begin{thmasmp}\label{asmp:identification_conditions}\textit{Consistency:} if $\mathbf{W}_i = \mathbf{w}$ then $Y_{ij}(\mathbf{W}_i) = Y_{ij}$ and $C_{ij}(\mathbf{w})=C_{ij}$, for all units $(i,j)$ and treatments $\mathbf{w} \in \{0,1\}^{M_i}$. \textit{Conditional exchangeability:} for each treatment $\mathbf{w} \in \{0,1\}^{M_i}$, $\left(Y_{ij}(\mathbf{w}), C_{ij}(\mathbf{w})\right) \indep \mathbf{W}_i \mid X_i$. \textit{Positivity:} for each treatment $\mathbf{w}\in \{0,1\}^{M_i}$, if $f(x) \neq 0$ then $\Pr(\mathbf{W}_i = \mathbf{w} \mid X_i = x) > 0$.
\end{thmasmp}

\textit{Consistency} is met when the intervention is unambiguously defined, meaning that no undisclosed variants of the treatment exist. \textit{Conditional exchangeability} corresponds to assuming no unmeasured confounding; specifically, the characteristics captured by cluster-level covariates are sufficient to control for any confounding between treatment assignment and outcome/cost. \textit{Positivity} necessitates that all clusters have a non-zero probability of receiving any of combination of available treatments among its units. In the context of our experimental study design, we emphasize that conditional exchangeability and positivity can be guaranteed by (conditional) randomization.

\section{Assessing treatment policies using Qini curves under clustered network interference}\label{sec:qini_curves}
We are interested in assessing treatment policies based on some treatment prioritization rule $S : \mathcal{X} \times \mathcal{Z} \rightarrow \mathbb{R}$ that attempts to rank all units across the clusters based on who responds best to the treatment. A larger $S(X_{i}, Z_{ij})$ should here be interpreted as that the unit $(i,j)$ is expected to have a larger treatment effect. Given a treatment prioritization rule $S$ and a fixed treatment threshold $R\in\mathbb{R}$, we will evaluate decisions by treatment policies defined as 
\begin{equation}
    \pi_{S,R}(x,z) = \begin{cases}
        1, &  S(x,z) \geq R \\
        0, & S(x,z) < R 
    \end{cases}~.
\end{equation}
Importantly, throughout this paper, we will assume that all treatment prioritization rules are derived independently of the data we will use to evaluate them on. For instance, $S$ could be the estimated model of the conditional average treatment effect, see e.g.~\citet{kunzel2019metalearners}, trained on a separate dataset or a formal prioritization rule developed by experts using domain knowledge. To simplify notation, we will omit the subscripts in $\pi_{S,R}$ when possible and simply write $\pi$ to denote a policy.

For clarity and consistency with prior work on Qini curve estimation, we present scoring rules that depend only on unit- and cluster-level covariates, ignoring interference effects. In principle, however, scoring rules could incorporate treatments and covariates of other units within the same cluster. Notably, the identifiability results at the end of this section remain valid under such generalized scoring rules, though policies based on them may be difficult to implement in practice. 

\subsection{Definition of the Qini curve}
A common approach to assess how well a treatment prioritization rule identifies those who best respond to a treatment is by using Qini curves~\citep{radcliffe2007using,sverdrup2025qini}. We denote the decisions made by the policy $\pi$ on an evaluation dataset as $\pi_{ij}=\pi(X_{i}, Z_{ij})\in\{0,1\}$ for unit $(i,j)$ and $\boldsymbol{\pi}_{i}=[\pi(X_{i}, Z_{i1}), \dots, \pi(X_{i}, Z_{iM_i})]\in\{0,1\}^{M_i}$ for the collective treatment decision on cluster $i$.
Then, we first define the policy value in terms of average cluster-level outcomes under decisions made by the policy $\pi$~as 
\begin{equation}
    V(\pi) = \E\left[\sum_{j=1}^{M_i} Y_{ij}\left(\boldsymbol{\pi}_i\right) \right]~,
\end{equation}
and the policy cost of applying $\pi$ as 
\begin{equation}
    C(\pi) =\E\left[\sum_{j=1}^{M_i} C_{ij}\left(\boldsymbol{\pi}_i\right) \right]~.
\end{equation}

Let $R_B$ denote the treatment threshold such that $C(\pi_{S,R_B})=B$, then we can define the Qini curve for a treatment prioritization rule $S$ as follows:
\begin{equation} \label{eq:qini_curve}
Q_{S}(B) = V(\pi_{S,R_B}) - V(\pi_0), \;\; B\in[0, B_{\text{max}}]~,
\end{equation}
where $\pi_0\equiv0$ is a reference policy that treats none and $B_{\text{max}}>0$ is the maximal allowed cost by the treatment strategy under consideration.

The above definition is more general than the one by \citet{radcliffe2007using}, who assumes uniform cost across all units. In their approach, one plots $Q_S(B)$ on the y-axis against the fraction of treated units on the x-axis. The definition presented here can be applied to this case by plotting $B/B_{\text{max}}$ on the x-axis, where $B_{\text{max}}$ represents the total cost of treating all units. We refer to this as the uniform cost case. 

To understand how interference introduces challenges in the estimation of Qini curves, consider a scenario where two units from the same cluster fall on opposite sides of the treatment threshold $R$ -- one above (indicating it should be treated) and one below (indicating it should not). Normally, these units would be considered independent, but in the presence of interference, spillover effects may occur between them. If these effects are not accounted for, we might over- or underestimate the policy value and cost which also affects the Qini curve estimation.

For the remainder of this paper, we will demonstrate how to address this issue and provide a methodology for estimating Qini curves that appropriately accounts for interference within a clustered network setting. However, before we propose multiple estimation strategies, we also establish a necessary identifiability result that enables the estimation.

\subsection{Identifiability of policy value and policy cost}

To estimate $Q_{S}(B)$ in~\eqref{eq:qini_curve}, $V(\pi)$ and $C(\pi)$ must be identifiable from the observed data. Under a study design that satisfies Assumption~\ref*{asmp:identification_conditions}, this identifiability is guaranteed. More specifically, recall that $\tilde{Y}_i  =\sum_{j=1}^{M_i} Y_{ij}$ and $\tilde{C}_i =\sum_{j=1}^{M_i} C_{ij}$, then we define
\begin{align}
\phi(\pi) = \E\big[\E\big[\tilde{Y} \mid \mathbf{W}=\boldsymbol{\pi}, X\big]\big] \quad\text{and}\quad \psi(\pi) = \E\big[\E\big[\tilde{C} \mid \mathbf{W}=\boldsymbol{\pi}, X\big]\big]~.
\end{align}
With these definitions, we obtain the following important identifiability result (see proof in Appendix~\ref{app:identification}).
\begin{thmlem} \label{lem:identification}
    Under Assumption~\ref{asmp:identification_conditions}, we have $V(\pi)=\phi(\pi)$ and $C(\pi)=\psi(\pi)$.
\end{thmlem}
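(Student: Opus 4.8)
The plan is to prove the two identities $V(\pi) = \phi(\pi)$ and $C(\pi) = \psi(\pi)$ by the same argument, since $\tilde{Y}$ and $\tilde{C}$ (together with their potential-outcome versions) play symmetric roles; I will write out the argument for $V(\pi)$ and then note that the cost case is verbatim identical with $Y$ replaced by $C$. The key observation is that the policy decision $\boldsymbol{\pi}_i = [\pi(X_i,Z_{i1}),\dots,\pi(X_i,Z_{iM_i})]$ is a deterministic function of the observed pre-treatment covariates $(X_i, Z_{i1},\dots,Z_{iM_i})$, and in particular is measurable with respect to the conditioning information once we condition on those covariates; this is what lets us swap the random "plug-in" treatment $\boldsymbol{\pi}_i$ for a fixed vector $\mathbf{w}$ inside the inner conditional expectation.

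First I would start from the definition $V(\pi) = \E\big[\sum_{j=1}^{M_i} Y_{ij}(\boldsymbol{\pi}_i)\big] = \E[\tilde{Y}_i(\boldsymbol{\pi}_i)]$, where $\tilde{Y}_i(\mathbf{w}) := \sum_{j=1}^{M_i} Y_{ij}(\mathbf{w})$ denotes the cluster-level potential outcome. Next I would condition on the full covariate vector of the cluster, writing $V(\pi) = \E\big[\E[\tilde{Y}_i(\boldsymbol{\pi}_i)\mid X_i, Z_{i1},\dots,Z_{iM_i}]\big]$ by the tower rule, and then sum/integrate over the (finitely many) possible realizations $\mathbf{w}$ of $\boldsymbol{\pi}_i$: since $\boldsymbol{\pi}_i$ is a deterministic function of the covariates, on the event $\{\boldsymbol{\pi}_i = \mathbf{w}\}$ the inner expectation equals $\E[\tilde{Y}_i(\mathbf{w})\mid X_i, \dots, Z_{iM_i}]$. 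Then I would apply \emph{conditional exchangeability}, $(Y_{ij}(\mathbf{w}), C_{ij}(\mathbf{w})) \indep \mathbf{W}_i \mid X_i$ — extended to the sum $\tilde{Y}_i(\mathbf{w})$ — to introduce the event $\{\mathbf{W}_i = \mathbf{w}\}$ into the conditioning set, giving $\E[\tilde{Y}_i(\mathbf{w})\mid \mathbf{W}_i = \mathbf{w}, X_i, \dots]$, and finally invoke \emph{consistency} to replace $\tilde{Y}_i(\mathbf{w})$ by the observed $\tilde{Y}_i$. Re-folding the deterministic substitution $\mathbf{w} \leftarrow \boldsymbol{\pi}_i$ and the tower rule then yields $V(\pi) = \E\big[\E[\tilde{Y}\mid \mathbf{W} = \boldsymbol{\pi}, X]\big] = \phi(\pi)$. \emph{Positivity} is used implicitly to ensure the conditional expectation $\E[\tilde{Y}\mid \mathbf{W} = \mathbf{w}, X = x]$ is well-defined for every $\mathbf{w}$ that $\boldsymbol{\pi}_i$ can take with positive probability.

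The main subtlety — and the step I would be most careful about — is the bookkeeping around conditioning on $Z_{i1},\dots,Z_{iM_i}$: conditional exchangeability as stated only conditions on $X_i$, so I need to argue that additionally conditioning on the unit-level covariates $Z_{ij}$ is harmless. This follows because the $Z_{ij}$ are pre-treatment and the study design assigns $W_{ij}$ based only on $X_i$ through the known propensity score $e_w(X_i)$, so $\mathbf{W}_i \indep (Z_{i1},\dots,Z_{iM_i}) \mid X_i$; combined with the stated exchangeability this gives $(\tilde{Y}_i(\mathbf{w}), \tilde{C}_i(\mathbf{w})) \indep \mathbf{W}_i \mid X_i, Z_{i1},\dots,Z_{iM_i}$, which is what the substitution step actually uses. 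A cleaner alternative, if one wants to avoid this detour, is simply to condition only on $X_i$ throughout (never introducing the $Z_{ij}$ explicitly) and note that $\boldsymbol{\pi}_i$, while not $X_i$-measurable, still satisfies $\boldsymbol{\pi}_i \indep \mathbf{W}_i \mid X_i$ because $\boldsymbol{\pi}_i$ is a function of pre-treatment variables only — this is the crux and the rest is routine.
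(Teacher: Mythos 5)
Your argument is correct and is essentially the paper's own proof run in the opposite direction: the paper starts from $\phi(\pi)$ and strips the conditioning using consistency and then conditional exchangeability, while you start from $V(\pi)$ and rebuild it with the same three ingredients (iterated expectations, exchangeability, consistency), so this is the same route rather than a different one. Your explicit treatment of the subtlety that $\boldsymbol{\pi}_i$ depends on the unit-level covariates $Z_{ij}$ — so that exchangeability given $X_i$ alone must be upgraded to exchangeability given $(X_i,Z_{i1},\dots,Z_{iM_i})$, which is justified here because randomization makes $\mathbf{W}_i$ jointly independent of all pre-treatment variables given $X_i$ (note that the two pairwise statements $\mathbf{W}_i \indep Z_i \mid X_i$ and $Y_{ij}(\mathbf{w}) \indep \mathbf{W}_i \mid X_i$ do not by themselves compose; it is the design-based joint independence that does the work) — is a point the paper's proof passes over silently, so your version is, if anything, the more careful one.
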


With the established identification results, we can now outline the general procedure for Qini curve estimation in our experimental study design. In practice, to estimate the Qini curve for a treatment prioritization rule $S$, one typically estimates the policy value $\phi(\pi)$ and cost $\psi(\pi)$ over a range of pre-specified thresholds $R$. For the uniform cost case, it is sufficient to only estimate the policy value.  We outline the full procedure in Algorithm~\ref{alg:qini_curve_estimation}. So far, we assumed ranking according to $S$ leads to no ties, but if there are ties one could add tiebreakers by, e.g., injecting a small amount of random noise. To implement this algorithm we would need estimators for $\phi(\pi)$ and $\psi(\pi)$; for this reason, next we describe three estimation strategies designed for different scenarios.

\begin{algorithm}[t]
\caption{Qini curve estimation}
\label{alg:qini_curve_estimation}

\begin{algorithmic}[1] 
\Require Dataset $D=\{X_i, \{Z_{ij}, W_{ij}, Y_{ij}, C_{ij}\}_{j=1}^{M_i}\}_{i=1}^N$; 
         treatment prioritization rule $S$; 
         number of percentiles $K$; 
         max budget $B_{\text{max}}$; 
         estimators $\widehat{\phi}$ and $\widehat{\psi}$; 
         Boolean flag indicating if cost is uniform
\State $\widehat{V}_0 \gets \widehat{\phi}(\pi_0 \equiv 0)$
\State $(\widehat{B}_0, \widehat{Q}_0) \gets (0,0)$
\State Let $S_{\text{sorted}}(i)$ be the score for the $i$th unit when sorted by $S$ in descending order
\For{$k = 1$ to $K$}
    \State $i_k \gets \text{round}\!\left(\tfrac{k}{K}\cdot |D|\right)$
    \State $R_B \gets S_{\text{sorted}}(i_k)$
    \If{cost is uniform}
        \State $\widehat{B}_k \gets \tfrac{k}{K} \cdot B_{\text{max}}$
    \Else
        \State $\widehat{B}_k \gets \widehat{\psi}(\pi_{S,R_B})$
    \EndIf
    \State $\widehat{Q}_k \gets \widehat{\phi}(\pi_{S,R_B}) - \widehat{V}_0$
\EndFor
\State \Return $\{(\widehat{B}_k, \widehat{Q}_k)\}_{k=0}^K$
\end{algorithmic}
\end{algorithm}

\section{Estimation strategies}\label{sec:estimators}

In this section we consider three strategies for estimating Qini curves under clustered network inference.  In particular, we will focus on weighting estimators that require the propensity score, $e_w(x)=\Pr(W=w\mid X=x)$. Since we assumed the propensity score to be known in our design, we avoid the need to estimate any other nuisance models. Doing so avoids the risk of introducing additional biases due to misspecification when having to estimate nuisance models. Although weighting estimators are the main focus of this paper, we also present in Appendix~\ref{app:augmentation} some preliminary investigations into techniques for augmenting weighted estimators by incorporating nuisance models that predict the outcome.

Throughout this section, due to the similarity of the estimands for the policy value $\phi(\pi)$ and cost $\psi(\pi)$, we only present strategies for estimating the policy value $\phi(\pi)$ which analogously can be applied for estimating the policy cost $\psi(\pi)$ as well.

\subsection{Cluster-level inverse probability weighting}
The simplest estimator for $\phi(\pi)$ in our setting is
\begin{equation} \label{eq:ipw_estimator}
    \IPW(\pi) = \frac{1}{N} \sum_{i=1}^{N} \frac{\mathbf{1}\left(\mathbf{W}_i=\boldsymbol{\pi}_i \right)}{\prod_{j=1}^{M_i} e_{\pi_{ij}}\left(X_i\right)} \tilde{Y}_i
\end{equation}
where $\mathbf{1}(\cdot)$ denotes the indicator function. This estimator is a natural extension of the traditional inverse probability weighting (IPW) estimator~\citep{robins1994estimation} to settings with clustered network interference, see e.g. ~\citet{tchetgen2012causal}. For this reason, we will refer to $\IPW(\pi)$ as the standard IPW estimator. We can show the following important properties of the standard IPW estimator (see proof in Appendix~\ref{app:derivation_IPW}).
\begin{thmthm} \label{thm:standard_IPW}
    Under Assumption~\ref{asmp:identification_conditions}, the standard IPW estimator $\IPW(\pi)$ is an unbiased estimator for the policy value, $\E[\IPW(\pi)]=V(\pi)$, and its sampling variance can be written as
    \begin{align}
        \V\left(\IPW(\pi)\right) &= \frac{1}{N^2} \sum_{i=1}^{N} \left\{ \E\left[ \omega(X_i) \left[\tilde{Y}_i(\boldsymbol{\pi}_i)\right]^2   \right] + \V\left( \tilde{Y}_{i}(\boldsymbol{\pi}_i) \right) \right\} \\
        \omega(X_i)  & = \left[\prod_{j=1}^{M_i} \left(\frac{e_{1}(X_i) e_{0}(X_i)}{e_{\pi_{ij}}\left(X_i\right)^2 } + 1 \right) - 1 \right]
    \end{align}
\end{thmthm}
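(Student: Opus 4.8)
The plan is to establish the two claims separately: first unbiasedness, then the variance formula. For \textbf{unbiasedness}, I would start from the definition of $\IPW(\pi)$ in \eqref{eq:ipw_estimator} and use linearity of expectation so it suffices to show each summand has expectation $\E[\tilde Y_i(\boldsymbol\pi_i)]$. Conditioning on $X_i$, the treatment $\mathbf W_i$ is assigned independently across units with known propensity $e_w(X_i)$, so $\Pr(\mathbf W_i = \boldsymbol\pi_i \mid X_i) = \prod_{j=1}^{M_i} e_{\pi_{ij}}(X_i)$, which exactly cancels the denominator. Then by consistency, on the event $\{\mathbf W_i = \boldsymbol\pi_i\}$ we have $\tilde Y_i = \tilde Y_i(\boldsymbol\pi_i)$, and by conditional exchangeability $\tilde Y_i(\boldsymbol\pi_i) \indep \mathbf W_i \mid X_i$, so the conditional expectation of the summand given $X_i$ is $\E[\tilde Y_i(\boldsymbol\pi_i)\mid X_i]$. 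Taking the outer expectation over $X_i$ and averaging over $i$ gives $\E[\IPW(\pi)] = \E[\tilde Y_i(\boldsymbol\pi_i)] = V(\pi)$ by Lemma~\ref{lem:identification} (positivity ensures the weights are well-defined). This part is essentially bookkeeping.

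For the \textbf{variance}, since clusters are independent, $\V(\IPW(\pi)) = \frac{1}{N^2}\sum_{i=1}^N \V(T_i)$ where $T_i = \frac{\mathbf 1(\mathbf W_i = \boldsymbol\pi_i)}{\prod_j e_{\pi_{ij}}(X_i)}\tilde Y_i$. I would compute $\V(T_i)$ via the law of total variance conditioning on $(X_i, \text{potential outcomes})$ — or more cleanly, compute $\E[T_i^2]$ directly and subtract $(\E[T_i])^2 = (\E[\tilde Y_i(\boldsymbol\pi_i)])^2$. For $\E[T_i^2]$: note $\mathbf 1(\mathbf W_i=\boldsymbol\pi_i)^2 = \mathbf 1(\mathbf W_i=\boldsymbol\pi_i)$ and on that event $\tilde Y_i^2 = \tilde Y_i(\boldsymbol\pi_i)^2$; conditioning on $X_i$ and the potential outcome $\tilde Y_i(\boldsymbol\pi_i)$, and using that treatment is independent of the potential outcome given $X_i$, we get
\begin{align}
\E[T_i^2] = \E\!\left[ \frac{\tilde Y_i(\boldsymbol\pi_i)^2}{\prod_{j=1}^{M_i} e_{\pi_{ij}}(X_i)^2}\,\Pr(\mathbf W_i = \boldsymbol\pi_i \mid X_i) \right] = \E\!\left[ \frac{\tilde Y_i(\boldsymbol\pi_i)^2}{\prod_{j=1}^{M_i} e_{\pi_{ij}}(X_i)} \right].
\end{align}
So $\V(T_i) = \E\!\big[\prod_j e_{\pi_{ij}}(X_i)^{-1}\,\tilde Y_i(\boldsymbol\pi_i)^2\big] - \big(\E[\tilde Y_i(\boldsymbol\pi_i)]\big)^2$.

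The remaining algebra is to massage this into the stated form. I would write $\prod_j e_{\pi_{ij}}(X_i)^{-1} = \prod_j\big(\frac{e_1(X_i)e_0(X_i)}{e_{\pi_{ij}}(X_i)^2} + 1\big) - \prod_j\big(\frac{e_1(X_i)e_0(X_i)}{e_{\pi_{ij}}(X_i)^2} + 1 - \frac{1}{e_{\pi_{ij}}(X_i)}\big)\cdot(\text{something})$ — actually the cleaner route: observe that for each $j$, $\frac{1}{e_{\pi_{ij}}(X_i)} = \frac{e_1(X_i)e_0(X_i)}{e_{\pi_{ij}}(X_i)^2} + 1 - \frac{e_{\bar\pi_{ij}}(X_i)}{e_{\pi_{ij}}(X_i)}\cdot\frac{?}{}$. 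Let me instead split $\E[\,\cdot\,]$ by further conditioning on $X_i$: inside, $\tilde Y_i(\boldsymbol\pi_i)^2$ times the weight, and note $\E[\tilde Y_i(\boldsymbol\pi_i)^2 \mid X_i] = \V(\tilde Y_i(\boldsymbol\pi_i)\mid X_i) + \E[\tilde Y_i(\boldsymbol\pi_i)\mid X_i]^2$; combined with the added-and-subtracted $\E[\tilde Y_i(\boldsymbol\pi_i)\mid X_i]^2$ terms and the definition $\omega(X_i) = \prod_j\big(\frac{e_1 e_0}{e_{\pi_{ij}}^2}+1\big) - 1$, the cross terms should collapse so that the surviving pieces are exactly $\E[\omega(X_i)\tilde Y_i(\boldsymbol\pi_i)^2] + \V(\tilde Y_i(\boldsymbol\pi_i))$ after using the law of total variance to recombine $\V(\cdot\mid X_i)$ and the variance of the conditional mean.

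\textbf{Main obstacle.} The unbiasedness argument is routine; the real work is the last algebraic identity — showing the raw second-moment expression reorganizes into the $\omega(X_i)$-weighted form plus the marginal variance $\V(\tilde Y_i(\boldsymbol\pi_i))$. The subtlety is tracking how the $-(\E[\tilde Y_i(\boldsymbol\pi_i)])^2$ term interacts with the decomposition $\E[\tilde Y_i(\boldsymbol\pi_i)^2\mid X_i]$: one must carefully apply the law of total variance $\V(\tilde Y_i(\boldsymbol\pi_i)) = \E[\V(\tilde Y_i(\boldsymbol\pi_i)\mid X_i)] + \V(\E[\tilde Y_i(\boldsymbol\pi_i)\mid X_i])$ and verify that the "$+1$" inside the product in $\omega$ is precisely what absorbs the $\E[\E[\tilde Y_i(\boldsymbol\pi_i)\mid X_i]^2]$ piece, leaving no leftover terms. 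I would do this by first handling the single-unit case $M_i = 1$ (where $\omega(X_i) = \frac{e_1 e_0}{e_{\pi_{i1}}^2} + 1 - 1 = \frac{e_1 e_0}{e_{\pi_{i1}}^2}$ and the identity is the classical IPW variance), then confirming the product structure generalizes cleanly because treatments are assigned independently within a cluster.
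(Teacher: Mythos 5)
Your proposal is correct, and for the variance it takes a genuinely different route from the paper's. The paper never forms the raw second moment: it applies the law of total variance conditioning on $(\tilde{Y}_i(\boldsymbol{\pi}_i), X_i)$, so the conditional mean of the per-cluster term is $\tilde{Y}_i(\boldsymbol{\pi}_i)$ (its variance gives the $\V(\tilde{Y}_i(\boldsymbol{\pi}_i))$ term directly), and the conditional variance reduces to the variance of a product of independent indicators, $\prod_{j}\{e_1(X_i)e_0(X_i)+e_{\pi_{ij}}(X_i)^2\}-\prod_j e_{\pi_{ij}}(X_i)^2$, which is exactly where $\omega(X_i)$ comes from — no algebraic reshuffling is needed. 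You instead compute $\E[T_i^2]=\E\big[\tilde{Y}_i(\boldsymbol{\pi}_i)^2/\prod_j e_{\pi_{ij}}(X_i)\big]$ and subtract $(\E[\tilde{Y}_i(\boldsymbol{\pi}_i)])^2$; this is correct, and the "massaging" step you flag as the main obstacle closes in one line rather than via the law-of-total-variance recombination you sketch: since $e_0(X_i)+e_1(X_i)=1$ and $\pi_{ij}\in\{0,1\}$, we have $e_1(X_i)e_0(X_i)+e_{\pi_{ij}}(X_i)^2=e_{\pi_{ij}}(X_i)\big(e_{1-\pi_{ij}}(X_i)+e_{\pi_{ij}}(X_i)\big)=e_{\pi_{ij}}(X_i)$, i.e. $\tfrac{e_1(X_i)e_0(X_i)}{e_{\pi_{ij}}(X_i)^2}+1=\tfrac{1}{e_{\pi_{ij}}(X_i)}$, so pointwise $\omega(X_i)+1=\prod_j e_{\pi_{ij}}(X_i)^{-1}$ and hence $\V(T_i)=\E[(\omega(X_i)+1)\tilde{Y}_i(\boldsymbol{\pi}_i)^2]-(\E[\tilde{Y}_i(\boldsymbol{\pi}_i)])^2=\E[\omega(X_i)\tilde{Y}_i(\boldsymbol{\pi}_i)^2]+\V(\tilde{Y}_i(\boldsymbol{\pi}_i))$; your $M_i=1$ sanity check is precisely this identity, and it extends multiplicatively to general $M_i$ because within-cluster independence was already used to write $\Pr(\mathbf{W}_i=\boldsymbol{\pi}_i\mid X_i)$ as a product. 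In terms of what each buys: the paper's conditioning yields the theorem's stated form term by term with no identities, whereas your second-moment route is more elementary and produces the arguably more transparent intermediate expression $\V(T_i)=\E\big[\tilde{Y}_i(\boldsymbol{\pi}_i)^2/\prod_j e_{\pi_{ij}}(X_i)\big]-(\E[\tilde{Y}_i(\boldsymbol{\pi}_i)])^2$, at the cost of one pointwise rewrite to recover the published $\omega$ form; your unbiasedness argument coincides with the paper's.
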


From the above theorem, we observe that while the standard IPW estimator is unbiased, its efficiency is poor which becomes evident from inspecting its sampling variance, Since $e_{1}(X_i)\neq 0$ and $e_{0}(X_i)\neq 0$ due to Assumption~\ref{asmp:identification_conditions}, the factor $\omega(X_i)$ increases exponentially with the cluster size $M_i$.  Consequently, its variance scales exponentially with the cluster size $M_i$ which makes it prohibitively difficult to use the standard IPW estimator for Qini curve estimation in scenarios where the cluster size $M_i$ is large. 

For this reason, we explore other weighting estimators that introduce additional conditions on the structure of the underlying interference. It is important to emphasize here that these additional conditions are not required for identification of the policy value $V(\pi)$, but invoked for more efficient estimation. As we will see, in the cases where these additional conditions do not hold, their respective estimators may introduce additional biases. This results in an inherent bias-variance trade-off for estimating Qini curves in the presence of interference.

\subsection{Interference under a fractional exposure mapping}
One strategy to deal with interference is by defining exposure mappings~\citep{aronow2017estimating}. An exposure mapping is a function $d_{ij} : \{0,1\}^{M_i} \rightarrow \mathcal{D}$ between all possible treatment configurations for unit $(i,j)$ and a representation (or, embedding) of the treatment configurations. In essence, we want to map similar treatment configurations to the same ``effective treatment''~\citep{manski2013identification}. If the space $\mathcal{D}$ has smaller cardinality than the original space $\{0,1\}^{M_i}$, which has cardinality $2^{M_i}$, we have the possibility for more efficient estimation.  Any exposure mapping, however, must fulfill the following~condition.

\begin{thmasmp}\label{asmp:fractional}
    The potential outcomes of a unit $(i,j)$ can be grouped by $d_{ij}$, meaning that $d_{ij}(\mathbf{w})=d_{ij}(\mathbf{w}')$ implies $Y_{ij}(\mathbf{w})=Y_{ij}(\mathbf{w}')$ for all $\mathbf{w}, \mathbf{w}'\in \{0,1\}^{M_i}$.
\end{thmasmp}

Here, we consider Qini curve estimation using one of the most common ways to define an exposure map. Namely, assuming that the potential outcome $Y_{ij}(\mathbf{w})$ for a unit $(i,j)$ is only a function of both its own treatment status $W_{ij}$ and the fraction of treated units within the same cluster~\citep{ugander2013graph,bajari2021multiple}.  This corresponds to the exposure mapping $d_{ij}(\mathbf{W}_i) = [W_{ij},  \overline{W}_i]$ where $\overline{W}_{i}=M_i^{-1}\sum_{j=1}^{M_i}W_{ij}$. 

Denoting the fraction of treated in cluster $i$ by policy $\pi$ as $\overline{\pi}_i=M_i^{-1}\sum_{j=1}^{M_i}\pi_{ij} $, we define the fractional IPW estimator as follows:
\begin{equation} \label{eq:fracIPW}
    \fracIPW(\pi) = \frac{1}{N} \sum_{i=1}^{N} \sum_{j=1}^{M_i} \frac{\mathbf{1}\left(W_{ij} = \pi_{ij}, \overline{W}_{i} = \overline{\pi}_{i} \right)}{q_{ij}(\boldsymbol{\pi}_i, X_i)} Y_{ij}
\end{equation}
where $q_{ij}(\boldsymbol{\pi}_i, X_i)=\Pr(W_{ij}=\pi_{ij}, \overline{W}_i=\overline{\pi}_i \mid X_i)$. As we show in Appendix~\ref{app:weight_fracIPW}, the probability $q_{ij}(\boldsymbol{\pi}_i, X_i)$ can be expressed in terms of the known propensity score. We can now show that the fractional IPW estimator is unbiased in the setting that the fractional exposure mapping is correctly specified (see proof in Appendix~\ref{app:proof_fracIPW}). 
\begin{thmthm} \label{thm:fractional_ipw}
    Under Assumptions~\ref{asmp:identification_conditions} and~\ref{asmp:fractional}, we have that the fractional IPW estimator $\fracIPW(\pi)$ is an unbiased estimator for the policy value, $\E\left[\fracIPW(\pi)\right]=V(\pi)$.
\end{thmthm}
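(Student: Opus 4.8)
The plan is to show that each cluster's summand in $\fracIPW(\pi)$ is conditionally unbiased for that cluster's contribution to $V(\pi)$, and then average over clusters. Since the clusters are i.i.d.\ draws from $P$ and $V(\pi)=\E\big[\sum_{j=1}^{M_i}Y_{ij}(\boldsymbol{\pi}_i)\big]$ by definition, it suffices to establish, for a generic cluster $i$,
\[
\E\!\left[\sum_{j=1}^{M_i}\frac{\mathbf{1}\!\left(W_{ij}=\pi_{ij},\,\overline{W}_i=\overline{\pi}_i\right)}{q_{ij}(\boldsymbol{\pi}_i,X_i)}\,Y_{ij}\right]=\E\!\left[\sum_{j=1}^{M_i}Y_{ij}(\boldsymbol{\pi}_i)\right],
\]
after which $\E[\fracIPW(\pi)]=\tfrac1N\sum_{i=1}^{N}\E\big[\sum_j Y_{ij}(\boldsymbol{\pi}_i)\big]=\E\big[\sum_j Y_{ij}(\boldsymbol{\pi}_i)\big]=V(\pi)$.

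First I would condition on everything about cluster $i$ except the treatment vector: the covariates $X_i$ and $\{Z_{ij}\}_{j=1}^{M_i}$ (which jointly determine $\boldsymbol{\pi}_i$ and hence $\overline{\pi}_i$), together with the full collection of potential outcomes $\{Y_{ij}(\mathbf{w})\}_{j,\mathbf{w}}$; call this $\sigma$-field $\mathcal{G}_i$. Under the randomized study design, $\mathbf{W}_i$ is drawn unit-wise from the known propensities $e_w(X_i)$ and independent noise, which gives the (stronger than Assumption~\ref{asmp:identification_conditions}) statement that $\mathbf{W}_i$ is independent of $\mathcal{G}_i$ given $X_i$; thus, conditionally on $\mathcal{G}_i$, the only remaining randomness is in $\mathbf{W}_i$, distributed as under $X_i$ alone. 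In particular $q_{ij}(\boldsymbol{\pi}_i,X_i)=\Pr(W_{ij}=\pi_{ij},\overline{W}_i=\overline{\pi}_i\mid X_i)$ is a fixed, $\mathcal{G}_i$-measurable constant, and it is strictly positive: by positivity (Assumption~\ref{asmp:identification_conditions}) each $\Pr(\mathbf{W}_i=\mathbf{w}\mid X_i)$ is positive, and $q_{ij}$ is a sum of such probabilities over a set of treatment vectors that contains $\boldsymbol{\pi}_i$ itself. (Its explicit form in terms of the propensity score is supplied in Appendix~\ref{app:weight_fracIPW}.)

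The crux of the argument is the next step, where Assumption~\ref{asmp:fractional} enters. On the event $\{W_{ij}=\pi_{ij},\ \overline{W}_i=\overline{\pi}_i\}$ the realized vector $\mathbf{W}_i$ need \emph{not} equal $\boldsymbol{\pi}_i$, but it does satisfy $d_{ij}(\mathbf{W}_i)=[W_{ij},\overline{W}_i]=[\pi_{ij},\overline{\pi}_i]=d_{ij}(\boldsymbol{\pi}_i)$. By Assumption~\ref{asmp:fractional} this forces $Y_{ij}(\mathbf{W}_i)=Y_{ij}(\boldsymbol{\pi}_i)$, and by consistency (Assumption~\ref{asmp:identification_conditions}) $Y_{ij}=Y_{ij}(\mathbf{W}_i)$, so that identically
\[
\mathbf{1}\!\left(W_{ij}=\pi_{ij},\,\overline{W}_i=\overline{\pi}_i\right)Y_{ij}=\mathbf{1}\!\left(W_{ij}=\pi_{ij},\,\overline{W}_i=\overline{\pi}_i\right)Y_{ij}(\boldsymbol{\pi}_i).
\]
This exposure-collapsing replacement is precisely what a correctly specified fractional mapping buys, and it is the step that fails (introducing bias) when Assumption~\ref{asmp:fractional} does not hold. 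Since $Y_{ij}(\boldsymbol{\pi}_i)$ and $q_{ij}(\boldsymbol{\pi}_i,X_i)$ are $\mathcal{G}_i$-measurable while the indicator depends only on $\mathbf{W}_i$, the conditional expectation of the $j$-th summand given $\mathcal{G}_i$ is
\[
\frac{Y_{ij}(\boldsymbol{\pi}_i)}{q_{ij}(\boldsymbol{\pi}_i,X_i)}\,\E\!\left[\mathbf{1}\!\left(W_{ij}=\pi_{ij},\,\overline{W}_i=\overline{\pi}_i\right)\mid\mathcal{G}_i\right]=\frac{Y_{ij}(\boldsymbol{\pi}_i)}{q_{ij}(\boldsymbol{\pi}_i,X_i)}\cdot q_{ij}(\boldsymbol{\pi}_i,X_i)=Y_{ij}(\boldsymbol{\pi}_i).
\]
Finally I would sum over $j=1,\dots,M_i$, apply the tower rule over $\mathcal{G}_i$ to obtain $\E\big[\sum_j \mathbf{1}(\cdots)q_{ij}^{-1}Y_{ij}\big]=\E\big[\sum_j Y_{ij}(\boldsymbol{\pi}_i)\big]$, and average over the $N$ i.i.d.\ clusters. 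The only genuinely delicate point is the exposure-collapsing step invoking Assumption~\ref{asmp:fractional}; everything else is the standard Horvitz--Thompson/IPW unbiasedness calculation, with positivity guaranteeing well-defined weights and the randomized design supplying the required conditional independence.
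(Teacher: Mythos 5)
Your proposal is correct and follows essentially the same route as the paper's proof in Appendix~\ref{app:proof_fracIPW}: invert the exposure probability $q_{ij}(\boldsymbol{\pi}_i,X_i)$, use consistency plus Assumption~\ref{asmp:fractional} to collapse $Y_{ij}$ to $Y_{ij}(\boldsymbol{\pi}_i)$ on the exposure event, and invoke exchangeability (via the randomized design) to equate the indicator's conditional probability with $q_{ij}$ before applying the tower rule. The only cosmetic difference is that you condition explicitly on the potential outcomes and cancel the weight, whereas the paper phrases the same cancellation as a chain of iterated conditional expectations given the exposure event.
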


Notably, the fractional IPW estimator requires additional assumptions compared to the standard IPW estimator to be unbiased, but its variance scales more favorably with the cluster size $M_i$ because the fractional exposure mapping reduces the cardinality of the treatment space from $2^{M_i}$ to $2 (M_i+1)$ per cluster. This reduction makes estimation more feasible in settings with large clusters. However, this does not imply that the variance of $\fracIPW$ grows linearly with $M_i$. The estimator remains inversely proportional to the probability $q_{ij}(\boldsymbol{\pi}_i, X_i)$. As $M_i$ increases, the number of possible treatment fractions grows, making it less likely to observe a specific fraction. Consequently, $q_{ij}(\boldsymbol{\pi}_i, X_i)$ approaches zero as $M_i$ increases, which amplifies the variance, though at a slower rate than the standard IPW estimator.

\subsection{Interference under \texorpdfstring{$\beta$}{beta}-additive model}

Next, we consider another strategy that can reduce variance compared to the standard IPW estimator. Specifically, we use the following polynomial model to describe the interference, as proposed by \citet{zhang2023individualized}.
\begin{thmasmp}\label{asmp:additivity}
    The potential outcome model satisfies $\E[Y_{ij}(\mathbf{W}_i) \mid X_i] = \mathbf{g}_j(X_i)^\top \gamma(\mathbf{W}_i)$ where $\mathbf{g}_j(X_i) = [g_j^{(0)}, \dots, g_j^{(m)}]^\top$ is an unknown vector of functions $g_j^{(\cdot)} : \mathcal{X} \rightarrow \mathbb{R}$ that may vary across units in the same cluster. Furthermore, we have the an augmented treatment vector $\gamma(\mathbf{W}_i) = \big[ 1, \mathbf{W}_{i}, \mathbf{W}_{i}^{(2)}, \dots, \mathbf{W}_{i}^{(\beta)} \big]^\top$
    with $\mathbf{W}_{i}^{(k)}=\big\{ \prod_{m=1}^{k} W_{ij_m} \;\big|\; j_1 < \dots < j_k \big\}$ that contains interactions up to the order of $\beta$ between treatment of different units in the same cluster. Here, $\beta$ is upper bounded by the largest possible $M_i$.
\end{thmasmp}
This assumption states that each unit's conditional mean potential outcome is a linear function of the augmented treatment vector $\gamma(\mathbf{W}_i)$, which includes interaction terms up to order $\beta$ between treatments within the same cluster. We will refer to the above assumption as the $\beta$-additive assumption.

Denoting $\mathcal{I}_i^{\beta}$ as the power set of $\{1, \dots, M_i\}$ with cardinality at most $\beta$, we define the $\beta$-additive IPW estimator
\begin{equation}
     \betaIPW(\pi; \beta) = \frac{1}{N} \sum_{i=1}^N \left[ \sum_{\mathcal{U}\in\mathcal{I}_i^{\beta}} \prod_{j\in\mathcal{U}} \left( \frac{\mathbf{1}(W_{ij}=\pi_{ij})}{e_{\pi_{ij}}(X_i)} - 1\right)\right] \tilde{Y}_i ~.
\end{equation}
To use the estimator $\betaIPW(\pi; \beta)$, we must specify $\beta$. When this parameter is chosen to satisfy Assumption~\ref{asmp:additivity} alongside Assumption~\ref{asmp:identification_conditions},~\citet{zhang2023individualized} proved the following result.
\begin{thmthm}\label{thm:beta_IPW}
     Under Assumptions~\ref{asmp:identification_conditions} and~\ref{asmp:additivity}, we have that the $\beta$-IPW estimator $\betaIPW(\pi)$ is an unbiased estimator for the policy value, $\E\left[\betaIPW(\pi; \beta)\right]=V(\pi)$.
\end{thmthm}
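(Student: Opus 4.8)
The plan is to reduce the claim to a single generic cluster and then to a purely combinatorial identity about ``centered'' inverse-propensity weights. Since $\betaIPW(\pi;\beta)$ averages the per-cluster terms $T_i = \big[\sum_{\mathcal{U}\in\mathcal{I}_i^{\beta}}\prod_{j\in\mathcal{U}}A_{ij}\big]\tilde Y_i$ with $A_{ij} := \mathbf{1}(W_{ij}=\pi_{ij})/e_{\pi_{ij}}(X_i) - 1$ (well defined by positivity), and the clusters are i.i.d.\ under $P$, it suffices to show $\E[T_i]=V(\pi)$ for one cluster. First I would condition on $X_i$ (and the cluster size $M_i$). By \emph{consistency}, $\tilde Y_i=\sum_{\ell}Y_{i\ell}(\mathbf{W}_i)$; since the experimental design assigns the $W_{i\ell}$ independently with probability $e_1(X_i)$ and \emph{conditional exchangeability} gives $\{Y_{i\ell}(\mathbf{w})\}\indep\mathbf{W}_i\mid X_i$, I can integrate out the treatment randomness and obtain $\E[T_i\mid X_i]=\sum_{\ell}\E\!\big[g_\beta(\mathbf{W}_i)\,h_\ell(\mathbf{W}_i)\mid X_i\big]$, where $g_\beta(\mathbf{w})=\sum_{\mathcal{U}\in\mathcal{I}_i^\beta}\prod_{j\in\mathcal{U}}(\mathbf{1}(w_j=\pi_{ij})/e_{\pi_{ij}}(X_i)-1)$ and $h_\ell(\mathbf{w})=\E[Y_{i\ell}(\mathbf{w})\mid X_i]$. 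Under Assumption~\ref{asmp:additivity} (read as a statement about the full potential-outcome surface), and using $w^2=w$ on $\{0,1\}$, $h_\ell$ is a \emph{square-free} polynomial of degree at most $\beta$, so I may write $h_\ell(\mathbf{w})=\sum_{S\subseteq\{1,\dots,M_i\},\,|S|\le\beta}c_{i\ell,S}(X_i)\prod_{k\in S}w_k$ with the coefficients being the entries of $\mathbf{g}_\ell(X_i)$.

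The core computation is then: for each $S$ with $|S|\le\beta$, show $\E\!\big[g_\beta(\mathbf{W}_i)\prod_{k\in S}W_{ik}\mid X_i\big]=\prod_{k\in S}\pi_{ik}$. Because each $A_{ij}$ depends only on $W_{ij}$, the $W_{ij}$ are conditionally independent given $X_i$, and $\E[A_{ij}\mid X_i]=0$ (as $\E[\mathbf{1}(W_{ij}=\pi_{ij})\mid X_i]=e_{\pi_{ij}}(X_i)$), the expectation of $\big(\prod_{j\in\mathcal{U}}A_{ij}\big)\prod_{k\in S}W_{ik}$ factorizes over coordinates and vanishes whenever $\mathcal{U}\not\subseteq S$ (such a $\mathcal{U}$ leaves a lone mean-zero factor $A_{ij}$ with $j\in\mathcal{U}\setminus S$). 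Hence only $\mathcal{U}\subseteq S$ survive, and since $|S|\le\beta$ every such $\mathcal{U}$ already lies in $\mathcal{I}_i^\beta$ -- this is exactly why truncating at order $\beta$ loses nothing here. Summing the surviving terms gives $\prod_{k\in S}\big(\E[A_{ik}W_{ik}\mid X_i]+e_1(X_i)\big)$, and a two-line case check on $\pi_{ik}\in\{0,1\}$ (again using $W_{ik}^2=W_{ik}$) shows $\E[A_{ik}W_{ik}\mid X_i]=\mathbf{1}(\pi_{ik}=1)\,e_0(X_i)-\mathbf{1}(\pi_{ik}=0)\,e_1(X_i)$, so each factor equals $\pi_{ik}$ and the product equals $\prod_{k\in S}\pi_{ik}$, as claimed.

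Substituting back, $\E[T_i\mid X_i]=\sum_{\ell}\sum_{S:|S|\le\beta}c_{i\ell,S}(X_i)\prod_{k\in S}\pi_{ik}=\sum_{\ell}h_\ell(\boldsymbol{\pi}_i)=\sum_{\ell}\E[Y_{i\ell}(\boldsymbol{\pi}_i)\mid X_i]$; taking the outer expectation over $(X_i,M_i)$ yields $\E[T_i]=\E[\sum_{\ell=1}^{M_i}Y_{i\ell}(\boldsymbol{\pi}_i)]=V(\pi)$, and averaging over $i$ finishes the proof. (Alternatively, one can phrase the final steps through $\phi(\pi)$ and Lemma~\ref{lem:identification}, noting that $\E[T_i\mid X_i]$ recovers $\E[\tilde Y_i\mid\mathbf{W}_i=\boldsymbol{\pi}_i,X_i]$ monomial by monomial.) I expect the only real obstacle to be bookkeeping rather than insight: making precise that the square-free expansion truncated at order $\beta$ exactly matches the degree-$\beta$ polynomial structure of the $\beta$-additive model -- so nothing needed is dropped and nothing dropped is needed -- and carrying the identity $\E[A_{ik}W_{ik}\mid X_i]+e_1(X_i)=\pi_{ik}$ cleanly through the coordinatewise factorization; the measure-theoretic reductions (consistency, exchangeability, conditional independence of the $W_{i\ell}$, and the mild reading of Assumption~\ref{asmp:additivity} together with the handling of the $Z$-dependence of $\boldsymbol{\pi}_i$ as in Lemma~\ref{lem:identification}) are routine given Assumption~\ref{asmp:identification_conditions} and the stated randomized design.
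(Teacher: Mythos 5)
Your proof is correct. One important contextual point: the paper does not actually prove Theorem~\ref{thm:beta_IPW} itself --- it imports the result from \citet{zhang2023individualized} --- so there is no in-paper proof to compare against; your argument is a correct self-contained derivation of the cited result. It is also fully consistent with the one piece of the relevant mechanics the paper does carry out, namely the Appendix~\ref{app:augmentation} computation showing $\E[\omega^{\beta\text{-IPW}}(\boldsymbol{W}_i,\boldsymbol{\pi}_i,X_i)\mid X_i]=1$, which uses exactly your key device: conditional independence of the $W_{ij}$ given $X_i$ together with the fact that each centered factor $\mathbf{1}(W_{ij}=\pi_{ij})/e_{\pi_{ij}}(X_i)-1$ has conditional mean zero, so any product containing a ``lone'' centered factor vanishes. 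Your additional step --- pairing the surviving subsets $\mathcal{U}\subseteq S$ against the square-free degree-$\beta$ monomials of $\E[Y_{i\ell}(\mathbf{w})\mid X_i]$ guaranteed by Assumption~\ref{asmp:additivity}, and the identity $\E[A_{ik}W_{ik}\mid X_i]+e_1(X_i)=\pi_{ik}$ (both cases $\pi_{ik}\in\{0,1\}$ check out) --- is precisely what makes the truncation at order $\beta$ lossless, and it correctly isolates where the assumption is needed: a monomial with $|S|>\beta$ would not telescope. The two loose ends you flag as routine (the dependence of $\boldsymbol{\pi}_i$ on the unit covariates $Z_{ij}$, which strictly requires conditioning on them or invoking that the randomized design makes $\mathbf{W}_i$ independent of $(Z_i,\text{potential outcomes})$ given $X_i$; and the fact that Assumption~\ref{asmp:additivity} conditions on $X_i$ only) are treated at the same level of rigor in the paper's own proofs of Theorems~\ref{thm:standard_IPW} and~\ref{thm:fractional_ipw}, so they are acceptable here as well.
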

The choice of $\beta$ dictates the strength of the $\beta$-additive assumption, which becomes less restrictive as $\beta$ increases. Setting $\beta=\max_i M_i$ imposes no additional constraints on the interference structure since, in this case, we have that $\betaIPW=\IPW$~\citep{zhang2023individualized}. Thus, from a practical point of view, the greatest variance reduction can be achieved by using a smaller $\beta$. 

To highlight the best variance reduction we can possibly achieve with the $\beta$-IPW estimator, we consider the special case of $\addIPW(\pi)=\betaIPW(\pi; \beta=1)$, which we refer to as the additive IPW estimator because there are no interactions between multiple treatments within the same cluster. Here, we can establish the following claim (see Appendix~\ref{app:variance_addIPW} for proof).
\begin{thmlem} \label{lem:beta_IPW_variance}
    We have that $\V\left(\addIPW\right) \propto \max_i M_i^2$.
\end{thmlem}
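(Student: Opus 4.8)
The plan is to write the additive IPW estimator as an average over clusters of a per-cluster term, bound the second moment of that term, and track the dependence on $M_i$. Recall that $\addIPW(\pi) = \frac{1}{N}\sum_{i=1}^N T_i$ where $T_i = \left[\sum_{j=1}^{M_i}\left(\frac{\mathbf{1}(W_{ij}=\pi_{ij})}{e_{\pi_{ij}}(X_i)} - 1\right)\right]\tilde{Y}_i$, since for $\beta=1$ the index set $\mathcal{I}_i^1$ consists of the empty set and the singletons, and the empty-set product contributes $1$ while each singleton $\{j\}$ contributes $\frac{\mathbf{1}(W_{ij}=\pi_{ij})}{e_{\pi_{ij}}(X_i)} - 1$; summing gives the bracket above. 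Because clusters are independent, $\V(\addIPW) = \frac{1}{N^2}\sum_{i=1}^N \V(T_i)$, so it suffices to show $\V(T_i) \lesssim M_i^2$ uniformly, whence the whole sum is $O(\max_i M_i^2 / N)$, i.e.\ proportional to $\max_i M_i^2$ in the sense claimed.

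First I would condition on $X_i$ and use the law of total variance, so the task reduces to bounding $\E\big[\V(T_i \mid X_i)\big]$ and $\V\big(\E[T_i\mid X_i]\big)$. The cleanest route is to bound $\E[T_i^2 \mid X_i]$ directly. Expand $T_i^2$: it is a double sum over $j, k \in \{1,\dots,M_i\}$ (plus cross terms with the constant $1$) of products $\left(\frac{\mathbf{1}(W_{ij}=\pi_{ij})}{e_{\pi_{ij}}(X_i)} - 1\right)\left(\frac{\mathbf{1}(W_{ik}=\pi_{ik})}{e_{\pi_{ik}}(X_i)} - 1\right)\tilde{Y}_i^2$. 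Writing $\tilde{Y}_i^2 = \big(\sum_{\ell} Y_{i\ell}\big)^2 = \sum_{\ell,\ell'} Y_{i\ell}Y_{i\ell'}$, we get a quadruple sum over indices in $\{1,\dots,M_i\}$, each term of which has $O(M_i^4)$ many summands but each summand is bounded because (i) the centered weight $\frac{\mathbf{1}(W_{ij}=\pi_{ij})}{e_{\pi_{ij}}(X_i)} - 1$ has mean zero and bounded second moment (by positivity, $e_{\pi_{ij}}(X_i)$ is bounded away from $0$), and (ii) the outcomes are bounded (or have bounded moments) by assumption. Naively this gives an $M_i^4$ bound, which is too weak; the key observation that recovers the $M_i^2$ rate is the \emph{mean-zero and independence} structure of the centered weights: $\E\big[\big(\frac{\mathbf{1}(W_{ij}=\pi_{ij})}{e_{\pi_{ij}}(X_i)} - 1\big) \mid X_i\big] = 0$ and, since unit-level treatments are assigned independently within a cluster, the centered weights for distinct units $j \neq k$ are independent given $X_i$ and independent of the potential outcomes (by conditional exchangeability). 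Hence any term in the quadruple sum in which some index $j$ appears in exactly one centered-weight factor and is independent of everything else in that term vanishes in expectation. The surviving terms are those where the two weight indices coincide ($j = k$) or the structural constraints force a pairing; a careful bookkeeping shows only $O(M_i^2)$ or $O(M_i^3)$-type index patterns survive, and in fact the dominant contribution is $O(M_i^2)$ after the $\tilde Y_i^2$ expansion is handled — one must check that the product $Y_{i\ell}Y_{i\ell'}$ summed over the two free outcome indices, combined with the single surviving weight pairing, gives the square of a sum of $M_i$ bounded terms, i.e.\ $O(M_i^2)$.

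The step I expect to be the main obstacle is exactly this combinatorial bookkeeping: organizing the quadruple sum $\sum_{j,k,\ell,\ell'}$ into classes according to which indices coincide and which weight factors are ``uncovered'' (appear mean-zero and independent), and verifying that after killing the vanishing classes the remainder is genuinely $\Theta(M_i^2)$ rather than $M_i^3$ — in particular handling the terms where a weight index equals an outcome index, since then $\mathbf{1}(W_{ij}=\pi_{ij})$ and $Y_{ij}$ are \emph{not} independent (treatment affects outcome), so those cross terms need the conditional-exchangeability/consistency identities rather than independence. A convenient way to manage this is to first note $\E[\mathbf{1}(W_{ij}=\pi_{ij}) Y_{ij} / e_{\pi_{ij}}(X_i) \mid X_i] = \E[Y_{ij}(\pi_{ij}) \mid X_i]$ (an IPW identity already used in proving Theorem~\ref{thm:standard_IPW}), push all the treatment indicators through using the within-cluster independence of $W_{i1},\dots,W_{iM_i}$ and consistency, and thereby rewrite $\E[T_i^2\mid X_i]$ as a sum of at most a constant number of ``types,'' each of which is manifestly a product of at most two sums over $\{1,\dots,M_i\}$ of conditionally bounded quantities — hence $O(M_i^2)$. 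Taking expectation over $X_i$ (using boundedness of outcomes and of $1/e_w$) preserves the $O(M_i^2)$ bound, subtracting $\big(\E[T_i\mid X_i]\big)^2$ only helps, and summing $\frac{1}{N^2}\sum_i O(M_i^2) = O(\max_i M_i^2)$ completes the argument. Throughout, the boundedness needed is mild — either $\mathcal{Y}$ bounded or finite fourth moments of $Y_{ij}$ — and the propensity bounds come for free from Assumption~\ref{asmp:identification_conditions}.
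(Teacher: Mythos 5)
There is a genuine gap, and it sits exactly where you flag the ``main obstacle.'' The cancellation that your bookkeeping relies on requires the centered weights $\epsilon_{ij}=\mathbf{1}(W_{ij}=\pi_{ij})/e_{\pi_{ij}}(X_i)-1$ to be conditionally independent of the outcome factors whenever the weight index differs from the outcome index. But the outcomes appearing in the expansion are the \emph{observed} $Y_{i\ell}=Y_{i\ell}(\mathbf{W}_i)$, which depend on the entire within-cluster treatment vector --- that dependence is the interference this paper is about. Conditional exchangeability in Assumption~\ref{asmp:identification_conditions} decouples $\mathbf{W}_i$ from the \emph{potential} outcomes, not from the observed ones, so $\E[\epsilon_{ij}Y_{i\ell}\mid X_i]\neq 0$ in general even for $\ell\neq j$; for instance, under Assumption~\ref{asmp:additivity} with $\beta=1$ this cross term equals $g_\ell^{(j)}(X_i)\,\E[\epsilon_{ij}W_{ij}\mid X_i]$, which is generically nonzero --- indeed these nonvanishing cross terms are precisely what makes $\addIPW$ pick up spillover effects and be unbiased. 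Consequently ``most terms vanish'' is false, the quadruple sum does not collapse, and with only unit-level boundedness of $Y_{ij}$ the surviving terms do not give $O(M_i^2)$ (already $\E[\tilde Y_i^2]$ alone is of order $M_i^2$, and the weight factor contributes on top of that). The only way to land on the quadratic rate along your route is to assume the cluster-level outcome $\tilde Y_i$ is itself bounded, at which point the entire expansion is unnecessary. A minor additional slip: for $\beta=1$ the per-cluster weight is $1+\sum_{j}\bigl(\mathbf{1}(W_{ij}=\pi_{ij})/e_{\pi_{ij}}(X_i)-1\bigr)=\sum_{j}\mathbf{1}(W_{ij}=\pi_{ij})/e_{\pi_{ij}}(X_i)-(M_i-1)$; your $T_i$ drops the empty-set contribution $+1$, though this does not affect the order.

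For comparison, the paper's own argument is deliberately cruder and avoids any cancellation: it rewrites $\addIPW(\pi)=\frac{1}{N}\sum_i\bigl\{\sum_j \mathbf{1}(W_{ij}=\pi_{ij})/e_{\pi_{ij}}(X_i)-(M_i-1)\bigr\}\tilde Y_i$, uses independence across clusters so that $\V(\addIPW)=N^{-2}\sum_i\V(\cdot)$, splits each per-cluster variance into a second moment minus $V(\pi)^2$, and then simply notes that the weight bracket is a sum of $M_i$ terms each bounded via positivity, hence of order $M_i$ pointwise, so its square supplies the quadratic factor (with $\tilde Y_i$ implicitly treated as not growing with $M_i$, as in the simulator where $\tilde Y_i\in\{0,1\}$). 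If you want to salvage your finer second-moment route, either adopt that bounded-$\tilde Y_i$ assumption explicitly --- in which case $\E[(\text{bracket})^2\tilde Y_i^2\mid X_i]\le \|\tilde Y\|_\infty^2\,\E[(\text{bracket})^2\mid X_i]$ finishes the argument in one line --- or carry the weight--outcome cross-covariances explicitly instead of asserting that they vanish.
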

Thus, for the additive IPW estimator, we observe that its variance will scale quadratically with the cluster size. This is a notable improvement over the exponential scaling of the standard IPW estimator.

\section{Experiments} \label{sec:experiments}

We aim to evaluate the performance of our proposed strategies for estimating Qini curves under clustered network interference. To balance realistic structures of interference with the benefit of having a ground-truth in synthetic data, we designed a simulator that mimics an e-commerce marketplace where interference arises through cannibalization among product items sold by different vendors. Notably, we implemented various interference structures in which Assumptions~\ref{asmp:fractional} and~\ref{asmp:additivity} are also violated.
We present our simulator as a framework that can be reused for future research on the topic of interference. We provide all code in the GitHub repository: \url{https://github.com/bookingcom/uplift-interference-simulator}.

In our experiments, we compare five strategies for estimating Qini curves. First, we implement a strategy that ignores all interference, which we refer to as the naive estimator  (more details are found in Appendix~\ref{app:estimation_no_interference}). Next, we implement the estimators discussed in this paper: the standard IPW estimator, the fractional IPW estimator and the $\beta$-IPW estimator with $\beta=1$ (additive IPW) or $\beta=2$.

\paragraph{Evaluation criteria} Our experiments focus on two key aspects of using Qini curves for decision-making. The first aspect is calibration: how accurately the estimates $\{\widehat{Q}_k\}_{k=0}^K$ reflect the ground truth values $\{Q_k\}_{k=0}^K$. Depending on the experiment, we assess this using bias $K^{-1}\sum_{k=1}^K\E[\widehat{Q}_k - Q_k]$, variance $K^{-1}\sum_{k=1}^K\V(\widehat{Q}_k)$, and mean squared error $K^{-1}\sum_{k=1}^K\E[(\widehat{Q}_k - Q_k)^2]$. The second aspect is discrimination: the ability to determine which policy is better. For this, we rank policies based on the estimated area under the Qini curve (higher is better) and use Kendall rank correlation to measure how well each estimator ranks policies compared to the ground truth ranking. By default, since our focus lies on evaluating treatment policies rather than the policies themselves, we implement a simple baseline policy with access to the underlying data-generating process. We then progressively degrade its performance by adding increasing levels of noise to its scoring rule. More details on this policy are provided in Appendix~\ref{app:simulation_details}. To simplify evaluation, we perform experiments in the uniform cost case.

\subsection{Simulating an e-commerce marketplace with clustered network interference}
\label{sec:simulator}
In this section, we describe a data-generating process where clusters correspond to potential buyers searching for some item, while treatment units are the items shown. In this marketplace, the treatment of an item $W_{ij}$ could correspond to e.g. discounts or promotions, and the outcome $Y_{ij}$ is whether the item $(i,j)$ was purchased by the buyer $i$. Treatment effects manifest as an incremental change in the probability of a purchase to occur due to the treatment. Each buyer can make at most one purchase, causing cannibalization as treatments may shift purchases between items rather than increasing total purchases.

To construct the dataset, we first sample the covariates $(X_i,Z_{ij})$ and treatment $W_{ij}$. Next, to introduce heterogeneous treatment effects, we compute an item attractiveness score matrix $\mathbf{A}$, where each element $A_{ij}\in[0,1]$ represents buyer $i$'s interest in purchasing item $(i,j)$. The elements in $\mathbf{A}$ depend on the covariates and assigned treatment. Details on this sampling and computation are provided in Appendix~\ref{app:simulation_details}. For simplicity, we assume all buyers observe the same number of items, denoted by $M$.

Next, before sampling the outcome $Y_{ij}$, which indicates whether item $(i,j)$ is purchased, we first determine whether buyer $i$ makes any purchase at all, denoted with the binary variable $\tilde{Y}_i$. Conditional on a purchase occurring, we then sample which specific item the buyer purchases. We sample $\tilde{Y}_i$ according to the Bernoulli probability $\eta(\mathbf{A}_i) = P(\tilde{Y}_i=1\mid\mathbf{A}_i)$. The structure of the interference in this dataset is largely determined by the choice of $\eta$.

We consider three alternatives for $\eta$. The simplest is $\eta_{\text{max}}(\mathbf{A}_i) = \max_{j} A_{ij}$ where only the most attractive item contributes the probability of a purchase by buyer $i$. We refer to $\eta_{\text{max}}$ as the max function. Next, we consider the product function $\eta_{\text{product}}(\mathbf{A}_i) = 1 - \prod_{j=1}^{M} (1-A_{ij})$. This function assumes that items contribute independently to a purchase such that $P(\tilde{Y}_i=1)=1-P(\tilde{Y}_i=0)=1-\prod_{j=1}^MP(Y_{ij}=0)$. Lastly, the third function is inspired by position bias, commonly found in ranking systems used in e-commerce platforms \citep{joachims2005accurately}. We refer to this as the exponential decay function, defined as $\eta_{\text{exp-decay}}(\mathbf{A}_i) = \sum_{j=1}^{M} (\frac{1}{2})^{\text{rank}(A_{ij})}A_{ij}$, where $\text{rank}(\cdot)$ returns the rank of the attractiveness scores for buyer $i$ in descending order. Each successive item contributes half as much as the preceding one to the probability of a purchase by buyer $i$.

For the final step, conditional on that we sample $\tilde{Y}_i=1$, we determine which item $(i,j)$ is purchased by sampling according to the probabilities given by the softmax function $P(Y_{ij}=1\mid \tilde{Y}_i=1, \mathbf{A}_i )=e^{A_{ij}/\lambda}/\sum_{j=1}^{M} e^{A_{ij}/\lambda}$ where we set the temperature parameter $\lambda=0.1$.

\begin{figure}[t]
    \centering
    \includegraphics[width=0.89\textwidth]{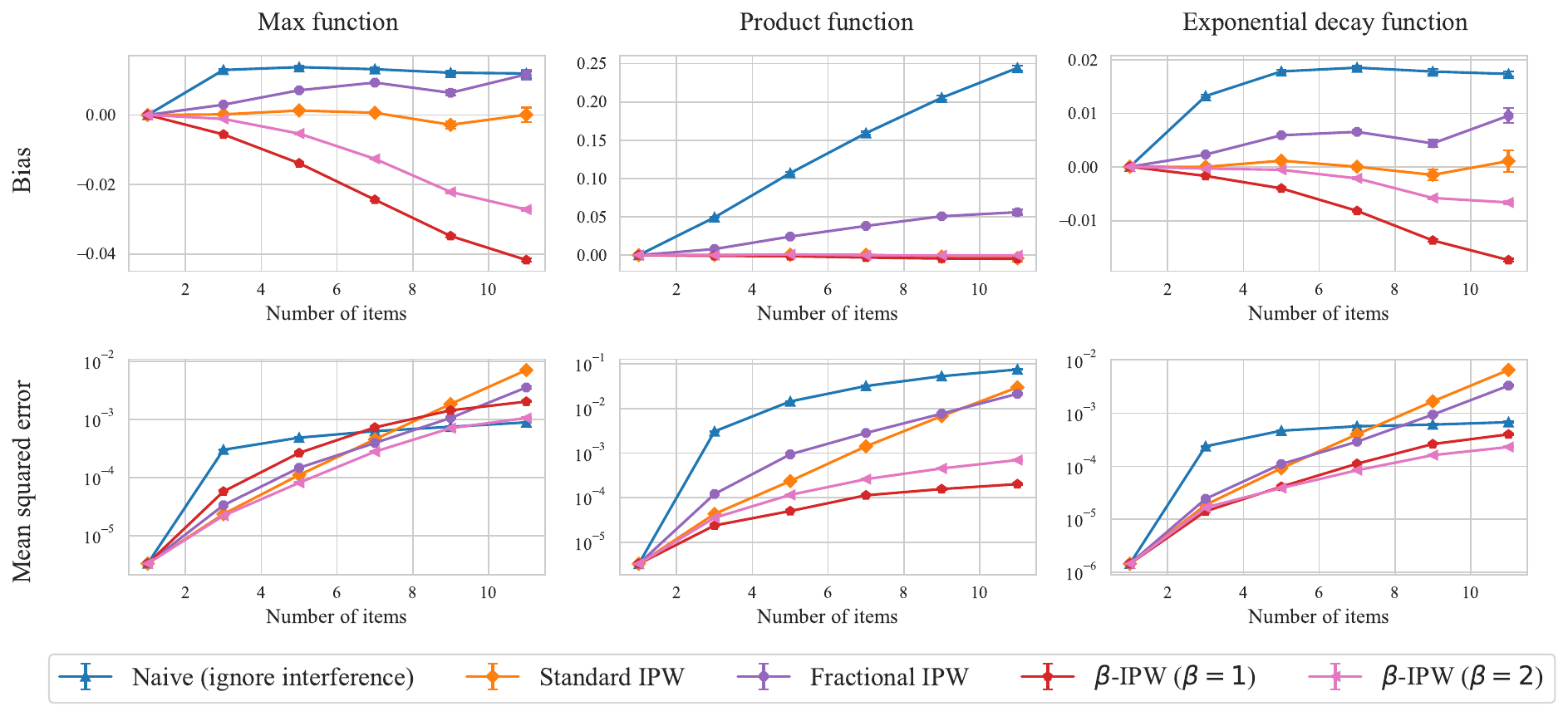}
    \caption{Comparison of bias and mean squared error for each strategy under different interference structures. We let $N=100.000$ and $M=11$. Averages and standard errors are reported over 150 repetitions.}
    \label{fig:compare_dgp}
\end{figure}

\subsection{How does interference affect the estimation error?}
In the first experiment, we evaluated all estimation strategies under different interference structures by varying $\eta$ as described in the previous subsection. In addition, we fixed the number of buyers (i.e., clusters) while varying the number of items $M$ (i.e., units), since spillover effects due to interference are largely expected to depend on cluster size; when the cluster size is one, there is no interference. 
We compared the bias and mean squared error (MSE) of each estimation strategy, as shown in Figure~\ref{fig:compare_dgp}. 

Starting with the bias, we observed that the naive estimator is significantly biased for all interference structures. The bias, which increased with number of items, was generally smaller for the fractional IPW and $\beta$-IPW estimators with $\beta=2$. Meanwhile, the standard IPW estimator appeared unbiased in all cases. In all cases, the $\beta$-IPW estimator with $\beta=2$ is observed to have lower bias than the variant with $\beta=1$ and, in some cases, for larger number of items (e.g., more than 10) the variant with $\beta=1$ had a similar or larger absolute bias as the naive estimator. 

In terms of MSE, the standard IPW and fractional IPW estimators performed worst, with MSE increasing exponentially with cluster size. The $\beta$-IPW estimators performed the best; in particular, for the product function, $\beta=1$ yielded the best MSE. The naive estimator performed the worst for small item counts but its increase in MSE appeared to slow down for larger number of~items.

Finally, by visually examining the average Qini curve for a fixed number of items $M=11$ with the exponential decay function, as in Figure~\ref{fig:qini_curve_qualitative}, we can obtain a more qualitative assessment of the bias of each estimation strategy. Starting from the left, we observe that the naive estimator vastly overestimates the Qini curve; the standard IPW is centered around the true curve; fractional IPW also overestimates, but to a lesser degree than the naive; and the $\beta$-IPW estimators underestimate the Qini curves, where $\beta=2$ has less bias than $\beta=1$ as expected. We further examined the average difference in the AUC of the estimated Qini curves and the ground-truth ones across all interference structures, and we observe similar trends for the other interference structure (max and product); the full results are provided in Appendix~\ref{app:additional_results}.

\subsection{Which estimation strategy is most efficient?}
In the next experiment, we evaluated the efficiency of each estimation strategy by reporting the variance as we varied the number the buyers $N$ (i.e., clusters) or items $M$ (i.e., units). While varying one, the other was kept fixed to $N=20.000$ and $K=11$. We present results only using $\eta_{\text{exp-decay}}$ as we observed no difference when changing this~function. The results are shown in Figure~\ref{fig:variance_scaling} in the appendix, and we summarize our main findings below.

The results indicate that the most efficient estimators, ranked from lowest to highest variance, are: naive, $\beta$-IPW $(\beta=1)$, $\beta$-IPW $(\beta=2)$, fractional IPW, and standard IPW. While the naive has the lowest variance, we recall that it also has the largest bias since it fails to take into account the interference. As expected, the variance of all estimators improved with more buyers (i.e., clusters) at a similar rate. Meanwhile, the variance of the standard and fractional IPW estimators increased exponentially with the number of items (i.e., cluster size), whereas the others scaled sub-exponentially. 

\begin{figure}[t]
    \centering
    \includegraphics[width=0.98\textwidth]{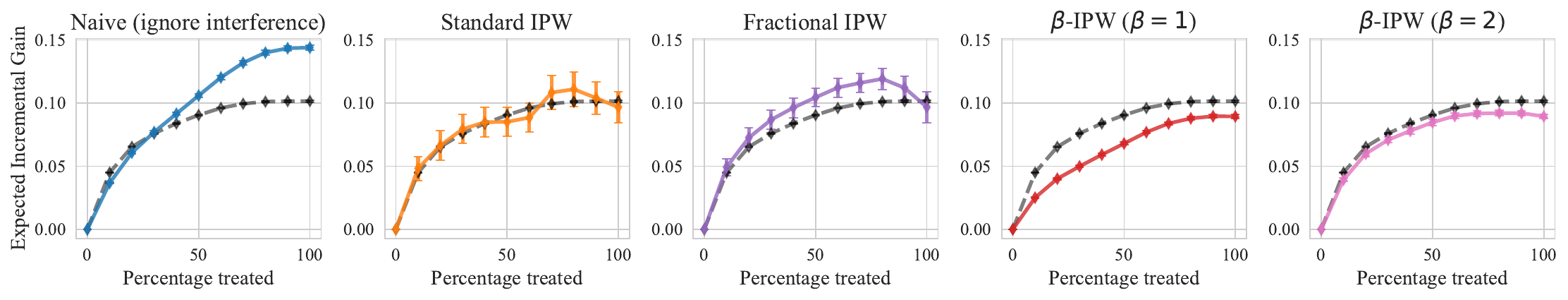}
    \caption{Qini curves of each estimation strategy with $N=100.000$ and $M=11$ using with the exponential decay function $\eta_{\text{exp-decay}}$. The average Qini curve and standard error are reported over 150 repetitions. The dashed black line corresponds to the true underlying Qini~curve.}
    \label{fig:qini_curve_qualitative}
\end{figure}

\begin{table}[t]
    \centering
    \caption{Comparison of ability to rank policies by each estimation strategy. We used $N=20.000$ number of buyers (i.e., clusters) and $K=11$ items (i.e., units) per buyer. We report the average Kendall rank correlation with respect to the ground truth ranking over 150 repetitions. Higher is better, where 1 corresponds to a perfect rank correlation.}
    \label{tab:kendall_tau_rank_correlation}
    \begin{tabular}{lrrr}
        \toprule
         & Max & Product & Exponential decay \\
        Estimation strategy &  &  &  \\
        \midrule
        Naive (ignore interference) & 1.000 & 0.808 & 1.000 \\
        Standard IPW & 0.928 & 0.845 & 0.945 \\
        Fractional IPW & 0.991 & 0.806 & 0.995 \\
        $\beta$-IPW ($\beta=1$) & 1.000 & 1.000 & 1.000 \\
        $\beta$-IPW ($\beta=2$) & 1.000 & 0.995 & 1.000 \\
        \bottomrule
    \end{tabular}
\end{table}

\subsection{How well can we rank policies under clustered network interference?}
In the final experiment, we evaluated each estimation strategy’s ability to rank policies based on the estimated area under the Qini curve. To do so, we degraded the baseline treatment policy used for evaluation by adding progressively larger noise to its treatment prioritization rule, generating seven policies with decreasing performance. This experiment was repeated 150 times  with $N=20.000$ and $K=11$ for each $\eta$,  and we report the average Kendall rank correlation coefficient between each strategy’s ranking and the ground truth.

The results, shown in Table~\ref{tab:kendall_tau_rank_correlation}, indicate that $\beta$-IPW with $\beta=1$ performed best overall, achieving perfect rankings in all cases (rank correlation = 1). The other estimators achieved similar or only slightly lower correlations, except for $\eta_{\text{product}}$, where the naive estimator, standard IPW, and fractional IPW had correlations between 0.80 and 0.85.

\section{Discussion}\label{sec:discussion}

Our findings indicate that, while clustered network interference can cause severe bias in Qini curve estimates when the interference is ignored, it is possible to get accurate estimates using different estimation strategies that take this interference into account. However, the best estimation strategy will depend on several application-specific factors, including cluster size, the number of observations, and prior beliefs about interference and the intended use of the Qini curve.

For small cluster sizes (e.g., fewer than 5), the choice of estimation strategy had a limited impact on estimation error, as IPW, fractional IPW, and $\beta$-IPW all performed comparably in terms of bias and mean squared error. For larger cluster sizes, however, we observed a trade-off between using an unbiased, high-variance estimator and a possibly biased, low-variance estimator. For unbiased estimation, the standard IPW estimator is preferred as it relies only on the weakest conditions regarding the interference, though it requires a large number of observations to be reliable. When the number of observations is limited, low-variance unbiased estimation might still be feasible if strong domain expertise about the interference structure can justify the use of either the fractional IPW or the $\beta$-IPW estimator. However, if some bias is tolerated, then our results suggest that $\beta$-IPW is a strong choice, as it has the lowest variance among strategies that account for interference.

The acceptable level of bias depends on the decision-making context. For model selection -- i.e., discriminating between good and bad policies -- we observed the $\beta$-IPW estimator performing best. Interestingly, the naive estimator that ignores interference also ranked policies correctly in some cases, suggesting that bias due to interference may have a limited effect on this type of decision-making. However, if the goal is to determine a suitable threshold for a treatment prioritization rule, a well-calibrated Qini curve becomes more critical, making an estimator with low bias~preferable.

\paragraph{Additional considerations when using $\beta$-IPW estimator}
Since Assumptions~\ref{asmp:fractional} and~\ref{asmp:additivity} are untestable, it is difficult to choose between the fractional and $\beta$-IPW estimators without domain expertise. However, for the $\beta$-IPW in particular, an alternative could be to adopt a data-adaptive strategy to find an estimator optimal with respect to the MSE. We can leverage the fact that larger values of $\beta$ make the $\beta$-additivity assumption less restrictive. Recall that as $\beta$ increases, the $\beta$-IPW estimator converges to the standard IPW estimator. This relationship can be exploited to estimate the bias for a given $\beta$ by comparing the $\beta$-IPW estimate to that of the unbiased standard IPW estimator.
Combining this bias estimate with an estimate of the sampling variance of the $\beta$-IPW estimator obtained via bootstrapping allows us to compute an MSE estimate for each $\beta$, from which we can select the value that minimizes the estimated MSE. A potential limitation of this approach, however, is that the MSE estimate itself may have high variance, owing to the uncertainty propagated from the standard IPW estimator used as a reference for estimating the bias. Despite this practical challenge, we view this approach as a promising direction for future work toward developing more data-adaptive methods for choosing among the estimators proposed here.

\section{Conclusion} 
\label{sec:conclusion}
To summarize, we have introduced a framework for estimating Qini curves in experimental study designs with clustered network interference, along with multiple estimation strategies. Our results demonstrate that properly accounting for interference leads to more accurate Qini curve estimation, though the best estimation strategy depends on the specific context. While these methods are not a universal solution for all types of interference and should be applied with care, especially in high-stakes settings, we provide practical recommendations based on both theoretical insights and empirical evidence. These guidelines are intended to help practitioners more effectively assess the cost-effectiveness of treatment policies in complex settings where interference is present.

\section*{Acknowledgments} 
This work was primarily conducted during an internship of RK at Booking.com. We appreciate the feedback from discussions by colleagues and internal reviewers which greatly improved earlier versions of this manuscript.

\bibliography{references}

\begin{thebibliography}{33}
\providecommand{\natexlab}[1]{#1}
\providecommand{\url}[1]{\texttt{#1}}
\expandafter\ifx\csname urlstyle\endcsname\relax
  \providecommand{\doi}[1]{doi: #1}\else
  \providecommand{\doi}{doi: \begingroup \urlstyle{rm}\Url}\fi

\bibitem[Angelopoulos et~al.(2023)Angelopoulos, Bates, Fannjiang, Jordan, and Zrnic]{angelopoulos2023prediction}
Anastasios~N Angelopoulos, Stephen Bates, Clara Fannjiang, Michael~I Jordan, and Tijana Zrnic.
\newblock Prediction-powered inference.
\newblock \emph{Science}, 382\penalty0 (6671):\penalty0 669--674, 2023.

\bibitem[Aronow \& Samii(2017)Aronow and Samii]{aronow2017estimating}
Peter~M Aronow and Cyrus Samii.
\newblock Estimating average causal effects under general interference, with application to a social network experiment.
\newblock \emph{The Annals of Applied Statistics}, 11\penalty0 (4):\penalty0 1912, 2017.

\bibitem[Bajari et~al.(2021)Bajari, Burdick, Imbens, Masoero, McQueen, Richardson, and Rosen]{bajari2021multiple}
Patrick Bajari, Brian Burdick, Guido~W Imbens, Lorenzo Masoero, James McQueen, Thomas Richardson, and Ido~M Rosen.
\newblock Multiple randomization designs.
\newblock \emph{arXiv preprint arXiv:2112.13495}, 2021.

\bibitem[Bokelmann \& Lessmann(2024)Bokelmann and Lessmann]{bokelmann2024improving}
Bj{\"o}rn Bokelmann and Stefan Lessmann.
\newblock Improving uplift model evaluation on randomized controlled trial data.
\newblock \emph{European Journal of Operational Research}, 313\penalty0 (2):\penalty0 691--707, 2024.

\bibitem[Brand \& Davis(2011)Brand and Davis]{brand2011impact}
Jennie~E Brand and Dwight Davis.
\newblock The impact of college education on fertility: Evidence for heterogeneous effects.
\newblock \emph{Demography}, 48:\penalty0 863--887, 2011.

\bibitem[Cao et~al.(2009)Cao, Tsiatis, and Davidian]{cao2009improving}
Weihua Cao, Anastasios~A Tsiatis, and Marie Davidian.
\newblock Improving efficiency and robustness of the doubly robust estimator for a population mean with incomplete data.
\newblock \emph{Biometrika}, 96\penalty0 (3):\penalty0 723--734, 2009.

\bibitem[Dasgupta et~al.(2014)Dasgupta, Pillai, and Rubin]{dasgupta2014factorial}
Tirthankar Dasgupta, Natesh~S. Pillai, and Donald~B. Rubin.
\newblock Causal inference from 2k factorial designs by using potential outcomes.
\newblock \emph{Journal of the Royal Statistical Society Series B: Statistical Methodology}, 77\penalty0 (4):\penalty0 727--753, 09 2014.
\newblock ISSN 1369-7412.

\bibitem[Eckles et~al.(2017)Eckles, Karrer, and Ugander]{eckles2017design}
Dean Eckles, Brian Karrer, and Johan Ugander.
\newblock Design and analysis of experiments in networks: Reducing bias from interference.
\newblock \emph{Journal of Causal Inference}, 5\penalty0 (1):\penalty0 20150021, 2017.

\bibitem[Goldenberg et~al.(2020)Goldenberg, Albert, Bernardi, and Estevez]{goldenberg2020free}
Dmitri Goldenberg, Javier Albert, Lucas Bernardi, and Pablo Estevez.
\newblock Free lunch! retrospective uplift modeling for dynamic promotions recommendation within roi constraints.
\newblock In \emph{Proceedings of the 14th ACM Conference on Recommender Systems}, pp.\  486--491, 2020.

\bibitem[Goplerud et~al.(2025)Goplerud, Imai, and Pashley]{goplerud2025estimatingheterogeneouscausaleffects}
Max Goplerud, Kosuke Imai, and Nicole~E. Pashley.
\newblock Estimating heterogeneous causal effects of high-dimensional treatments: Application to conjoint analysis.
\newblock \emph{arXiv preprint arXiv:2201.01357}, 2025.

\bibitem[Hernan \& Robins(2020)Hernan and Robins]{hernan2020causal}
M.A. Hernan and J.M. Robins.
\newblock \emph{Causal Inference: What If}.
\newblock Chapman \& Hall/CRC Monographs on Statistics \& Applied Probab. CRC Press, 2020.

\bibitem[Holtz et~al.(2024)Holtz, Lobel, Lobel, Liskovich, and Aral]{holtz2024reducing}
David Holtz, Felipe Lobel, Ruben Lobel, Inessa Liskovich, and Sinan Aral.
\newblock Reducing interference bias in online marketplace experiments using cluster randomization: Evidence from a pricing meta-experiment on airbnb.
\newblock \emph{Management Science}, 2024.

\bibitem[Hudgens \& Halloran(2008)Hudgens and Halloran]{hudgens2008toward}
Michael~G Hudgens and M~Elizabeth Halloran.
\newblock Toward causal inference with interference.
\newblock \emph{Journal of the American Statistical Association}, 103\penalty0 (482):\penalty0 832--842, 2008.

\bibitem[Imai \& Li(2023)Imai and Li]{imai2023experimental}
Kosuke Imai and Michael~Lingzhi Li.
\newblock Experimental evaluation of individualized treatment rules.
\newblock \emph{Journal of the American Statistical Association}, 118\penalty0 (541):\penalty0 242--256, 2023.

\bibitem[Joachims et~al.(2005)Joachims, Granka, Pan, Hembrooke, and Gay]{joachims2005accurately}
Thorsten Joachims, Laura Granka, Bing Pan, Helene Hembrooke, and Geri Gay.
\newblock Accurately interpreting clickthrough data as implicit feedback.
\newblock In \emph{SIGIR'05}, pp.\  154--161, 2005.

\bibitem[Karlsson et~al.(2024)Karlsson, Wang, De~Bartolomeis, Krijthe, and Dahabreh]{karlsson2024robust}
Rickard Karlsson, Guanbo Wang, Piersilvio De~Bartolomeis, Jesse~H Krijthe, and Issa~J Dahabreh.
\newblock Robust integration of external control data in randomized trials.
\newblock \emph{arXiv preprint arXiv:2406.17971}, 2024.

\bibitem[Kravitz et~al.(2004)Kravitz, Duan, and Braslow]{kravitz2004evidence}
Richard~L Kravitz, Naihua Duan, and Joel Braslow.
\newblock Evidence-based medicine, heterogeneity of treatment effects, and the trouble with averages.
\newblock \emph{The Milbank Quarterly}, 82\penalty0 (4):\penalty0 661--687, 2004.

\bibitem[K{\"u}nzel et~al.(2019)K{\"u}nzel, Sekhon, Bickel, and Yu]{kunzel2019metalearners}
S{\"o}ren~R K{\"u}nzel, Jasjeet~S Sekhon, Peter~J Bickel, and Bin Yu.
\newblock Metalearners for estimating heterogeneous treatment effects using machine learning.
\newblock \emph{Proceedings of the national academy of sciences}, 116\penalty0 (10):\penalty0 4156--4165, 2019.

\bibitem[Manski(2013)]{manski2013identification}
Charles~F Manski.
\newblock Identification of treatment response with social interactions.
\newblock \emph{The Econometrics Journal}, 16\penalty0 (1):\penalty0 S1--S23, 2013.

\bibitem[Ogburn et~al.(2020)Ogburn, Shpitser, and Lee]{ogburn2020causal}
Elizabeth~L Ogburn, Ilya Shpitser, and Youjin Lee.
\newblock Causal inference, social networks and chain graphs.
\newblock \emph{Journal of the Royal Statistical Society Series A: Statistics in Society}, 183\penalty0 (4):\penalty0 1659--1676, 2020.

\bibitem[Radcliffe(2007)]{radcliffe2007using}
Nicholas Radcliffe.
\newblock Using control groups to target on predicted lift: Building and assessing uplift model.
\newblock \emph{Direct Marketing Analytics Journal}, pp.\  14--21, 2007.

\bibitem[Robins et~al.(1994)Robins, Rotnitzky, and Zhao]{robins1994estimation}
James~M Robins, Andrea Rotnitzky, and Lue~Ping Zhao.
\newblock Estimation of regression coefficients when some regressors are not always observed.
\newblock \emph{Journal of the American statistical Association}, 89\penalty0 (427):\penalty0 846--866, 1994.

\bibitem[Rosenbaum \& Rubin(1983)Rosenbaum and Rubin]{rosenbaum1983central}
Paul~R Rosenbaum and Donald~B Rubin.
\newblock The central role of the propensity score in observational studies for causal effects.
\newblock \emph{Biometrika}, 70\penalty0 (1):\penalty0 41--55, 1983.

\bibitem[R{\"o}{\ss}ler \& Schoder(2022)R{\"o}{\ss}ler and Schoder]{rossler2022bridging}
Jannik R{\"o}{\ss}ler and Detlef Schoder.
\newblock Bridging the gap: A systematic benchmarking of uplift modeling and heterogeneous treatment effects methods.
\newblock \emph{Journal of Interactive Marketing}, 57\penalty0 (4):\penalty0 629--650, 2022.

\bibitem[Rubin(1974)]{rubin1974estimating}
Donald~B Rubin.
\newblock Estimating causal effects of treatments in randomized and nonrandomized studies.
\newblock \emph{Journal of educational Psychology}, 66\penalty0 (5):\penalty0 688, 1974.

\bibitem[Rubin(1980)]{rubin1980randomization}
Donald~B Rubin.
\newblock Randomization analysis of experimental data: The fisher randomization test comment.
\newblock \emph{Journal of the American statistical association}, 75\penalty0 (371):\penalty0 591--593, 1980.

\bibitem[Sobel(2006)]{sobel2006randomized}
Michael~E Sobel.
\newblock What do randomized studies of housing mobility demonstrate? causal inference in the face of interference.
\newblock \emph{Journal of the American Statistical Association}, 101\penalty0 (476):\penalty0 1398--1407, 2006.

\bibitem[Sverdrup et~al.(2025)Sverdrup, Wu, Athey, and Wager]{sverdrup2025qini}
Erik Sverdrup, Han Wu, Susan Athey, and Stefan Wager.
\newblock Qini curves for multi-armed treatment rules.
\newblock \emph{Journal of Computational and Graphical Statistics}, 34\penalty0 (3):\penalty0 948--960, 2025.

\bibitem[Tchetgen \& VanderWeele(2012)Tchetgen and VanderWeele]{tchetgen2012causal}
Eric J~Tchetgen Tchetgen and Tyler~J VanderWeele.
\newblock On causal inference in the presence of interference.
\newblock \emph{Statistical methods in medical research}, 21\penalty0 (1):\penalty0 55--75, 2012.

\bibitem[Ugander et~al.(2013)Ugander, Karrer, Backstrom, and Kleinberg]{ugander2013graph}
Johan Ugander, Brian Karrer, Lars Backstrom, and Jon Kleinberg.
\newblock Graph cluster randomization: Network exposure to multiple universes.
\newblock In \emph{Proceedings of the 19th ACM SIGKDD international conference on Knowledge discovery and data mining}, pp.\  329--337, 2013.

\bibitem[Yadlowsky et~al.(2024)Yadlowsky, Fleming, Shah, Brunskill, and Wager]{yadlowsky2024evaluating}
Steve Yadlowsky, Scott Fleming, Nigam Shah, Emma Brunskill, and Stefan Wager.
\newblock Evaluating treatment prioritization rules via rank-weighted average treatment effects.
\newblock \emph{Journal of the American Statistical Association}, pp.\  1--14, 2024.

\bibitem[Zhang \& Imai(2025)Zhang and Imai]{zhang2023individualized}
Yi~Zhang and Kosuke Imai.
\newblock Individualized policy evaluation and learning under clustered network interference.
\newblock \emph{arXiv preprint arXiv:2311.02467}, 2025.

\bibitem[Zhao et~al.(2024)Zhao, Bai, Xiong, Cao, Ma, Jiang, Wu, and Kuang]{zhao2024learning}
Ziyu Zhao, Yuqi Bai, Ruoxuan Xiong, Qingyu Cao, Chao Ma, Ning Jiang, Fei Wu, and Kun Kuang.
\newblock Learning individual treatment effects under heterogeneous interference in networks.
\newblock \emph{ACM Transactions on Knowledge Discovery from Data}, 18\penalty0 (8):\penalty0 1--21, 2024.

\end{thebibliography}
\bibliographystyle{tmlr}

\newpage
\appendix

\section{Proofs and derivations}\label{app:proofs}

\subsection{Proof of Lemma~\ref{lem:identification}} \label{app:identification}

\begin{proof}
    Under Assumption~\ref{asmp:identification_conditions}, we can show that $\E[\phi(\pi)]=V(\pi)$ by rewriting the expectation as follows,
    \begin{align*}
        \phi(\pi) & = \E\left[\E[\tilde{Y} \mid \mathbf{W}=\boldsymbol{\pi}, X] \right] \\
        & = \E\left[\E\left[ \sum_{j=1}^{M_i}Y_{ij}(\boldsymbol{\pi}_i) \mid \mathbf{W}=\boldsymbol{\pi}_i, X_{i}\right] \right] \\
        & = \E\left[\E\left[ \sum_{j=1}^{M_i}  Y_{ij}(\boldsymbol{\pi}_i) \mid X_{i} \right] \right] \\
        & = \E\left[ \sum_{j=1}^{M_i}  Y_{ij}(\boldsymbol{\pi}_i)  \right] 
    \end{align*}
    where the second equality from that we defined $\tilde{Y}_i=\sum_{j=1}^{M_i}Y_{ij}$ and then $Y_{ij}=Y_{ij}(\boldsymbol{\pi}_i)$ due to consistency in Assumption~\ref{asmp:identification_conditions}, and finally the third equality from conditional exchangeability $Y_{ij}(\boldsymbol{\pi}_i)\indep \mathbf{W}_i\mid X_i$ in Assumption~\ref{asmp:identification_conditions}.  We can prove analogously using the same arguments that $\E[\psi(\pi)]=C(\pi)$.
\end{proof}

\subsection{Proof of Theorem~\ref{thm:standard_IPW}}
\label{app:derivation_IPW}

\begin{proof}

    The unbiasedness of the standard IPW estimator $\IPW(\pi)$ follows from the same arguments as deriving the inverse probability weighting estimator in settings with no interference~\citep{hernan2020causal}. 
    We can show that
    \begin{align*}
        \E\left[\IPW(\pi)\right] & =  \E\left[\frac{1}{N}\sum_{i=1}^{N}\frac{\mathbf{1}\left(\mathbf{W}_i=\boldsymbol{\pi}_i \right)}{\prod_{j=1}^{M_i} e_{\pi_{ij}}\left(X_i\right)} \tilde{Y}_i\right] \\
        & = \E\left[\frac{1}{N} \sum_{i=1}^{N}\frac{\mathbf{1}(\mathbf{W}_i =\boldsymbol{\pi}_i)}{\Pr(\mathbf{W}_i =\boldsymbol{\pi}_i \mid X_i)} \tilde{Y}_i\right] \\ 
        & = \E\left[\frac{1}{N} \sum_{i=1}^{N} \E\left[\frac{\mathbf{1}(\mathbf{W}_i =\boldsymbol{\pi}_i)}{\Pr(\mathbf{W}_i =\boldsymbol{\pi}_i \mid X_i)} \tilde{Y}_i\mid X_i \right]\right] \\
        & = \E\left[\frac{1}{N} \sum_{i=1}^{N} \E\left[ \tilde{Y}_i\mid \mathbf{W}_i=\boldsymbol{\pi}_i, X_i \right]\right] \\
        & = \E\left[\phi(\pi)\right]
    \end{align*}
    The second equality follows from the independent treatment assignments where
    \begin{equation*}
        \Pr(\mathbf{W}_i =\boldsymbol{\pi}_i \mid X_i)=\prod_{j=1}^{M_i} \Pr(W_{ij}=\pi_{ij}\mid X_i) = \prod_{j=1}^{M_i} e_{\pi_{ij}}\left(X_i\right),
    \end{equation*}
    The unbiasedness of $\IPW(\pi)$ then follows from $\E\left[\phi(\pi)\right]=V(\pi)$ under Assumption~\ref{asmp:identification_conditions}. 
        
    Next, we derive the expression for $\V\left(\IPW(\pi)\right)$. Due to independence of clusters, we first note that
    \begin{align*}
        \V\left(\IPW(\pi)\right) = \V\left(\frac{1}{N}\sum_{i=1}^{N}\frac{\mathbf{1}\left(\mathbf{W}_i=\boldsymbol{\pi}_i \right)}{\prod_{j=1}^{M_i} e_{\pi_{ij}}\left(X_i\right)} \tilde{Y}_i \right) = \frac{1}{N^2} \sum_{i=1}^N \V\left( \frac{\mathbf{1}\left(\mathbf{W}_i=\boldsymbol{\pi}_i \right)}{\prod_{j=1}^{M_i} e_{\pi_{ij}}\left(X_i\right)} \tilde{Y}_i \right)~.
    \end{align*}
    By using the law of total variance, we can rewrite 
    \begin{align*}
        \V\left( \frac{\mathbf{1}\left(\mathbf{W}_i=\boldsymbol{\pi}_i \right)}{\prod_{j=1}^{M_i} e_{\pi_{ij}}\left(X_i\right)} \tilde{Y}_i \right) = \E\left[\underbrace{\V\left( \frac{\mathbf{1}\left(\mathbf{W}_i=\boldsymbol{\pi}_i \right)}{\prod_{j=1}^{M_i} e_{\pi_{ij}}\left(X_i\right)} \tilde{Y}_i \mid \tilde{Y}_i(\boldsymbol{\pi}_i), X_i \right)}_{(a)} \right] + \V\left(\underbrace{\E\left[\frac{\mathbf{1}\left(\mathbf{W}_i=\boldsymbol{\pi}_i \right)}{\prod_{j=1}^{M_i} e_{\pi_{ij}}\left(X_i\right)} \tilde{Y}_i \mid \tilde{Y}_i(\boldsymbol{\pi}_i), X_i \right]}_{(b)} \right)~.
    \end{align*}
    Inspecting $(b)$ first, we note that
    \begin{align*}
    (b) & =\E\left[\frac{\mathbf{1}\left(\mathbf{W}_i=\boldsymbol{\pi}_i \right)}{\prod_{j=1}^{M_i} e_{\pi_{ij}}\left(X_i\right)} \tilde{Y}_i(\boldsymbol{\pi}_i) \mid \tilde{Y}_i(\boldsymbol{\pi}_i), X_i \right] \\
    & = \tilde{Y}_i(\boldsymbol{\pi}_i) \E\left[\frac{\mathbf{1}\left(\mathbf{W}_i=\boldsymbol{\pi}_i \right)}{\prod_{j=1}^{M_i} e_{\pi_{ij}}\left(X_i\right)} \mid \tilde{Y}_i(\boldsymbol{\pi}_i), X_i \right] \\
    & = \tilde{Y}_i(\boldsymbol{\pi}_i)~.
    \end{align*}
    where it follows from conditional exchangeability in Assumption~\ref{asmp:identification_conditions} that $\E\left[\frac{\mathbf{1}\left(\mathbf{W}_i=\boldsymbol{\pi}_i \right)}{\prod_{j=1}^{M_i} e_{\pi_{ij}}\left(X_i\right)} \mid \tilde{Y}_i(\boldsymbol{\pi}_i), X_i \right]=\E\left[\frac{\mathbf{1}\left(\mathbf{W}_i=\boldsymbol{\pi}_i \right)}{\prod_{j=1}^{M_i} e_{\pi_{ij}}\left(X_i\right)} \mid X_i \right]=1$.

    Similarly, we can show that
    \begin{align*}
        (a) & = \V\left( \frac{\mathbf{1}\left(\mathbf{W}_i=\boldsymbol{\pi}_i \right)}{\prod_{j=1}^{M_i} e_{\pi_{ij}}\left(X_i\right)} \tilde{Y}_i(\boldsymbol{\pi}_i) \mid \tilde{Y}_i(\boldsymbol{\pi}_i), X_i \right) \\ 
        & = \left[\frac{\tilde{Y}_i(\boldsymbol{\pi}_i)}{ \prod_{j=1}^{M_i} e_{\pi_{ij}}\left(X_i\right) }\right]^2 \V\left( \mathbf{1}(\mathbf{W}_i=\boldsymbol{\pi}_i) \mid \tilde{Y}_i(\boldsymbol{\pi}_i), X_i \right) \\
        & = \left[\frac{\tilde{Y}_i(\boldsymbol{\pi}_i)}{ \prod_{j=1}^{M_i} e_{\pi_{ij}}\left(X_i\right) }\right]^2 \V\left( \mathbf{1}(\mathbf{W}_i=\boldsymbol{\pi}_i) \mid X_i \right)
    \end{align*}
    where the last equality follows from conditional exchangeability again. Next, using that treatment assignments are independent, we can further rewrite
    \begin{align*}
        \V\left( \mathbf{1}(\mathbf{W}_i=\boldsymbol{\pi}_i) \mid X_i \right) & =  \V\left( \prod_{j=1}^{M_i} \mathbf{1}(W_{ij}=\pi_{ij}) \mid X_i \right) \\
        &= \prod_{j=1}^{M_i}\left\{ \V\left( \mathbf{1}(W_{ij}=\pi_{ij}) \mid X_i\right) + \E\left[ \mathbf{1}(W_{ij}=\pi_{ij}) \mid X_i\right]^2 \right\} - \prod_{j=1}^{M_i} \E\left[\mathbf{1}(W_{ij}=\pi_{ij}) \mid X_i\right]^2 \\
        & = \prod_{j=1}^{M_i} \left\{e_{1}(X_i) e_{0}(X_i) + e_{\pi_{ij}}(X_i)^2 \right\} - \prod_{j=1}^{M_i} e_{\pi_{ij}}(X_i)^2 
    \end{align*}
    Plugging the above expression back into $(a)$, we get
    \begin{align*}
        (a) &= \left[\frac{\tilde{Y}_i(\boldsymbol{\pi}_i)}{ \prod_{j=1}^{M_i} e_{\pi_{ij}}\left(X_i\right) }\right]^2 \left[ \prod_{j=1}^{M_i} \left\{e_{1}(X_i) e_{0}(X_i) + e_{\pi_{ij}}(X_i)^2 \right\} - \prod_{j=1}^{M_i} e_{\pi_{ij}}(X_i)^2  \right] \\
        & = \left[\tilde{Y}_i(\boldsymbol{\pi}_i)\right]^2 \left[\prod_{j=1}^{M_i} \left\{\frac{e_{1}(X_i) e_{0}(X_i)}{e_{\pi_{ij}}\left(X_i\right)^2 } + 1 \right\} - 1 \right]
    \end{align*}
    At last, plugging our expression of $(a)$ and $(b)$ back into where we started, we obtain the final expression for the variance of the standard IPW estimator,
    \begin{equation*}
        \V\left(\IPW(\pi)\right) = \frac{1}{N^2} \sum_{i=1}^{M_i} \left\{ \E\left[ \left[\tilde{Y}_i(\boldsymbol{\pi}_i)\right]^2 \left[\prod_{j=1}^{M_i} \left(\frac{e_{1}(X_i) e_{0}(X_i)}{e_{\pi_{ij}}\left(X_i\right)^2 } + 1 \right) - 1 \right]  \right] + \V\left( \tilde{Y}_{i}(\boldsymbol{\pi}_i) \right) \right\}
    \end{equation*}
\end{proof}

\subsection{Expressing \texorpdfstring{$q_{ij}$}{q\_ij} in terms of the propensity score}
\label{app:weight_fracIPW}
We have 
\begin{align*}
    q_{ij}(\boldsymbol{\pi}_i, X_i) & = \Pr(W_{ij}=\pi_{ij}, \overline{W}_i=\overline{\pi}_i \mid X_i) \\
    & = \Pr(W_{ij}=\pi_{ij} \mid \overline{W}_i=\overline{\pi}_i X_i) \Pr(\overline{W}_i=\overline{\pi}_i \mid X_i)~.
\end{align*}
As the propensity score $e_{w}(X_i)=\Pr(W_{ij}=w\mid X_i)$ is the same for every $j=1,\dots, M_i$, we have that all units in a cluster have the same probability of being treated. Therefore, once conditioning on the fraction $\overline{W}_i$ of treated in a cluster, the fraction equals to the probability that a unit has been treated in that cluster. We can thus write
\begin{equation*}
    \Pr(W_{ij}=\pi_{ij} \mid \overline{W}_i=\overline{\pi}_i X_i) = \pi_{ij}\cdot \overline{\pi}_i + (1-\pi_{ij})\cdot (1-\overline{\pi}_i)~.
\end{equation*}
Next, for the second probability $\Pr(\overline{W}_i=\overline{\pi}_i \mid X_i)$, we note that $\overline{W}_i = M_i^{-1} \sum_{j=1}^{M_i} W_{ij}$ can be seen a Binomial random variables scaled by $M_i^{-1}$. This means $M_i^{-1}\sum_{j=1}^{M_i} W_{ij} \sim \text{B}(M_i, e_{1}(X_i))$ and thus we have
\begin{equation*}
    \Pr(\overline{W}_i=\overline{\pi}_i \mid X_i) = \binom{M_i}{\overline{\pi}_i \cdot M_i} [e_1(X_i)]^{\overline{\pi}_i \cdot M_i} [1-e_1(X_i)]^{(1-\boldsymbol{\pi}_i)\cdot M_i}~.
\end{equation*}
Combining both expressions from above, we get
\begin{align*}
q_{ij}(\pi_{i}, X_i) =\left[\pi_{ij}\cdot \overline{\pi}_i + (1-\pi_{ij})\cdot (1-\overline{\pi}_i)\right] \times \binom{M_i}{\overline{\pi}_i \cdot M_i} [e_1(X_i)]^{\overline{\pi}_i \cdot M_i} [e_0(X_i)]^{(1-\overline{\pi}_i)\cdot M_i}~.
\end{align*}

\subsection{Proof of Theorem~\ref{thm:fractional_ipw}}
\label{app:proof_fracIPW}
\begin{proof}
    We can show that
    \begin{align*}
        \E\left[\fracIPW\right]& =\E\left[\frac{1}{N}\sum_{i=1}^N\sum_{j=1}^{M_i} \frac{1(W_{ij}=\pi_{ij}, \overline{W}_i=\overline{\pi}_i)}{q_{ij}(\boldsymbol{\pi}_i, X_i)}Y_{ij} \right]  \\
        & = \frac{1}{N}\sum_{i=1}^N \E\left[ \E\left[ \sum_{j=1}^{M_i} \frac{1(W_{ij}=\pi_{ij}, \overline{W}_i=\overline{\pi}_i)}{q_{ij}(\boldsymbol{\pi}_i, X_i)}Y_{ij} \mid X_i \right] \right]\\
        & = \frac{1}{N}\sum_{i=1}^N \E\left[ \E\left[ \sum_{j=1}^{M_i} Y_{ij} \mid d_{ij}(\mathbf{W}_i) = [\pi_{ij}, \overline{\pi}_i], X_i \right] \right] \\
        & = \frac{1}{N}\sum_{i=1}^N \E\left[ \E\left[ \sum_{j=1}^{M_i} Y_{ij}(\mathbf{W}_i) \mid d_{ij}(\mathbf{W}_i) = [\pi_{ij}, \overline{\pi}_i], X_i \right] \right] \\
        & = \frac{1}{N}\sum_{i=1}^N \E\left[ \E\left[ \sum_{j=1}^{M_i} Y_{ij}(\mathbf{W}_i) \mid X_i \right] \right] \\
        & = \frac{1}{N}\sum_{i=1}^N \E\left[ \sum_{j=1}^{M_i} Y_{ij}(\mathbf{W}_i)  \right] \\
        & = \frac{1}{N}\sum_{i=1}^N V(\pi) = V(\pi)
    \end{align*}
    where the second equality follows from linearity of expectations and law of iterated expectations, the fourth equality follows consistency in Assumption~\ref{asmp:identification_conditions} and that the exposure mapping fulfills  Assumption~\ref{asmp:fractional}, and finally the fifth equality from conditional exchangeability in Assumption~\ref{asmp:identification_conditions} because $Y_{ij}(\mathbf{w}) \indep \mathbf{W_i} \mid X_i \Rightarrow Y_{ij}(\mathbf{w}) \indep d(\mathbf{W_i}) \mid X_i$ for all $\mathbf{w}\in\{0,1\}^{M_i}$.
\end{proof}

\subsection{Proof of Lemma~\ref{lem:beta_IPW_variance}} \label{app:variance_addIPW}

\begin{proof}
    We have defined $\addIPW(\pi) = \betaIPW(\pi;\beta=1)$ which has a simpler form
\begin{align*}
    \addIPW(\pi) = \frac{1}{N}\sum_{i=1}^N \left\{ \sum_{j=1}^{M_i} \frac{\mathbf{1}(W_{ij}=\pi_{ij})}{e_1(X_i)} - (M_i -1) \right\} \tilde{Y}_i~.
\end{align*}
As clusters are independent, we can write $\V\left(\addIPW \right) = \frac{1}{N^2}\sum_{i=1}^N \V\left(\left\{ \sum_{j=1}^{M_i} \frac{\mathbf{1}(W_{ij}=\pi_{ij})}{e_1(X_i)} - (M_i -1) \right\} \tilde{Y}_i \right)$ where the variance terms inside the sum can be decomposed as
\begin{align*}
    \underbrace{\E\left[\left(\left\{ \sum_{j=1}^{M_i} \frac{\mathbf{1}(W_{ij}=\pi_{ij})}{e_1(X_i)} - (M_i -1) \right\} \tilde{Y}_i\right)^2\right]}_{(a)} - \underbrace{\E\left[\left\{ \sum_{j=1}^{M_i} \frac{\mathbf{1}(W_{ij}=\pi_{ij})}{e_1(X_i)} - (M_i -1) \right\} \tilde{Y}_i\right]^2 }_{(b)}~.
\end{align*}
When $\addIPW$ is an unbiased estimator, we have that $(b)=V(\pi)^2$. For $(a)$, we note that the sum $\sum_{j=1}^{M_i} \frac{\mathbf{1}(W_{ij}=\pi_{ij})}{e_1(X_i)}$ is linear with respect to $M_i$.  Thus, inspecting the full expression for the variance,
\begin{equation*}
    \V\left(\addIPW\right) = \frac{1}{N^2} \sum_{i=1}^N\left(\E\left[\left\{ \sum_{j=1}^{M_i} \frac{\mathbf{1}(W_{ij}=\pi_{ij})}{e_1(X_i)} - (M_i -1) \right\}^2 \tilde{Y}_i^2 \right]\right) - V(\pi)^2~,
\end{equation*}
we can see that the variance will scale quadratically with $M_i$.
\end{proof}

\section{Variance reduction through augmented weighted estimators} \label{app:augmentation}

In this work, we have considered weighting estimators that require only a single nuisance model, namely the propensity score $e_w(x)=\Pr(W=w\mid X=x)$. Because the propensity score is assumed to be known in our setting, these estimators are appealing as they eliminate the need to estimate any additional nuisance models. This, in turn, helps avoid potential biases that could arise from model misspecification.
However, a key limitation of weighting estimators is that they can suffer from high variance. In this appendix, we briefly discuss a technique that introduces an additional nuisance model without risking bias, while offering the potential for variance reduction if the added model is well specified. We also present experimental results showing that this approach can indeed reduce variance without introducing bias, although the magnitude of the reduction may vary depending on the setting.

We start with a weighting estimator of the form $\widehat{\phi}(\pi; \omega) = \frac{1}{N} \sum_{i=1}^N \omega(\boldsymbol{W}_i, \boldsymbol{\pi}_i, X_i) \tilde{Y}_i$, where $\omega$ is a pre-specified weighting function. Both the standard IPW estimator and the $\beta$-IPW estimator can be expressed in this form using different choices of weighting functions:
\begin{align}
    \omega^{\text{IPW}}(\boldsymbol{W}_i, \boldsymbol{\pi}_i, X_i) =\frac{\mathbf{1}\left(\mathbf{W}_i=\boldsymbol{\pi}_i \right)}{\prod_{j=1}^{M_i} e_{\pi_{ij}}\left(X_i\right)} \quad \text{and} \quad
    \omega^{\beta\text{-IPW}}(\boldsymbol{W}_i, \boldsymbol{\pi}_i, X_i) =  \sum_{\mathcal{U}\in\mathcal{I}_i^{\beta}} \prod_{j\in\mathcal{U}} \left( \frac{\mathbf{1}(W_{ij}=\pi_{ij})}{e_{\pi_{ij}}(X_i)} - 1\right)~.
\end{align}
This class of weighting estimators does not include the fractional IPW estimator, since it performs weighting directly on unit-level outcomes, as can be seen from inspecting~\eqref{eq:fracIPW}.

We shall focus on an ``augmented'' variant of these weighted estimators which we define as
\begin{equation} \label{eq:augmentation}
    \widehat\phi^{\text{augmented}}(\pi; \omega) = \frac{1}{N} \sum_{i=1}^N \omega(\boldsymbol{W}_i, \boldsymbol{\pi}_i, X_i) (\tilde{Y}_i - \widehat{g}(X_i)) + \widehat{g}(X_i)~,
\end{equation}
where $\widehat{g}(x)$ is an estimator of the cluster-level conditional expectation of the outcome given the pre-treatment covariates, $\E[\tilde{Y} \mid X = x]$. The cluster-level treatment information is excluded to avoid modeling the potentially high-dimensional structure of the cluster-level treatment.

The augmented weighted estimator is reminiscent of the augmented inverse probability weighting (AIPW) estimator~\citep{robins1994estimation}, which has been used to adjust for pre-treatment covariates with the goal of reducing estimator variance while protecting against bias when the true propensity score is known (see, e.g.,~\citet{cao2009improving,karlsson2024robust}). AIPW estimators model both the treatment-dependent conditional expected outcome and treatment probability and are doubly robust in the sense that if either of the estimators for these models is correctly specified, the AIPW estimator remains consistent. In our setting, where the design is randomized and the propensity score is known, a similar robustness guarantee holds.
However, unlike the AIPW estimator, our approach relies on a prognostic cluster-level outcome model, $\E[\tilde{Y} \mid X]$, that does not incorporate treatment information, as would typically be required in a doubly robust estimator. The form of the augmented estimator also shares similarities with prediction-powered inference estimators~\citep{angelopoulos2023prediction}, which use an auxiliary prediction model to improve inference, although those estimators do not employ any weighting function.

We can use any flexible, data-adaptive model to learn $\widehat{g}$, and notably, $\widehat{\phi}^{\text{augmented}}(\pi; \omega)$ is robust in the sense that it remains unbiased for the policy value regardless of whether $\widehat{g}(X)$ is correctly specified, provided that the weighting function $\omega$ satisfies some conditions. Additionally, to ensure this robustness guarantee, $\widehat{g}$ must be fitted independently of the observations used in the weighting estimator. In practice, this can be achieved using a cross-fitting procedure: the data are split into two folds, $\widehat{g}$ is fitted on one fold and predictions are made on the other fold, and then the roles of the folds are swapped and the procedure is repeated.

\begin{thmthm} \label{thm:augmented_estimator_unbiasedness}
Suppose the weighted estimator $\widehat\phi(\pi; \omega)$ is unbiased for the policy value, $\E[\widehat\phi(\pi; \omega)]=V(\pi)$, and that the weighting function satisfies $\E[\omega(\boldsymbol{W},\boldsymbol{\pi},X)\mid X]=1$. Furthermore, assume that $\widehat{g}$ is obtained independently of the observations used to compute its predictions. Then the augmented weighting estimator $\widehat\phi^{\text{augmented}}(\pi; \omega)$  is unbiased for the policy value, $\E[\widehat\phi^{\text{augmented}}(\pi; \omega)]=V(\pi)$. 
\end{thmthm}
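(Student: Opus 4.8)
The plan is to take expectations of the augmented estimator in~\eqref{eq:augmentation} and show the correction term vanishes in expectation, reducing the estimand back to $V(\pi)$ via the unbiasedness of the base weighted estimator. The key structural fact I would exploit is that the augmentation term $\omega(\boldsymbol{W}_i, \boldsymbol{\pi}_i, X_i)(\tilde{Y}_i - \widehat{g}(X_i)) + \widehat{g}(X_i)$ can be rewritten as $\omega(\boldsymbol{W}_i, \boldsymbol{\pi}_i, X_i)\tilde{Y}_i - \big(\omega(\boldsymbol{W}_i, \boldsymbol{\pi}_i, X_i) - 1\big)\widehat{g}(X_i)$, so that $\widehat{\phi}^{\text{augmented}}(\pi;\omega) = \widehat{\phi}(\pi;\omega) - \frac{1}{N}\sum_{i=1}^N \big(\omega(\boldsymbol{W}_i, \boldsymbol{\pi}_i, X_i) - 1\big)\widehat{g}(X_i)$. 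By linearity of expectation and the assumed unbiasedness $\E[\widehat{\phi}(\pi;\omega)] = V(\pi)$, it suffices to show that $\E\big[\big(\omega(\boldsymbol{W}_i, \boldsymbol{\pi}_i, X_i) - 1\big)\widehat{g}(X_i)\big] = 0$ for each $i$.

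Next I would establish this last display by conditioning on $X_i$ (and, via the cross-fitting assumption, on the separate fold used to fit $\widehat{g}$, so that $\widehat{g}(X_i)$ is a fixed function when we take the inner expectation over $\boldsymbol{W}_i$). Applying the law of iterated expectations,
\begin{align*}
    \E\big[\big(\omega(\boldsymbol{W}_i, \boldsymbol{\pi}_i, X_i) - 1\big)\widehat{g}(X_i)\big]
    &= \E\Big[\widehat{g}(X_i)\,\E\big[\omega(\boldsymbol{W}_i, \boldsymbol{\pi}_i, X_i) - 1 \mid X_i\big]\Big] \\
    &= \E\big[\widehat{g}(X_i)\cdot(1 - 1)\big] = 0~,
\end{align*}
where the penultimate equality uses precisely the hypothesis $\E[\omega(\boldsymbol{W},\boldsymbol{\pi},X)\mid X]=1$. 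Pulling $\widehat{g}(X_i)$ out of the inner expectation is justified because, under cross-fitting, $\widehat{g}$ is a deterministic function of data independent of the fold containing cluster $i$, and within the inner conditional expectation $X_i$ is held fixed. Combining the two steps gives $\E[\widehat{\phi}^{\text{augmented}}(\pi;\omega)] = V(\pi) - 0 = V(\pi)$, as claimed. I would also note in passing that both $\omega^{\text{IPW}}$ and $\omega^{\beta\text{-IPW}}$ satisfy the required condition $\E[\omega \mid X]=1$ — for the IPW weight this is the standard calculation $\E[\mathbf{1}(\mathbf{W}_i=\boldsymbol{\pi}_i)\mid X_i]=\prod_j e_{\pi_{ij}}(X_i)$ used already in the proof of Theorem~\ref{thm:standard_IPW}, and for the $\beta$-IPW weight it follows since each factor $\frac{\mathbf{1}(W_{ij}=\pi_{ij})}{e_{\pi_{ij}}(X_i)} - 1$ has conditional mean zero and distinct-unit treatments are independent given $X_i$, so every nontrivial product term in the sum over $\mathcal{U}$ vanishes except $\mathcal{U}=\emptyset$, which contributes $1$.

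The main obstacle is not the algebra, which is short, but making the independence/cross-fitting argument airtight: one must be careful that $\widehat{g}$ is fit on data disjoint from cluster $i$ so that $\widehat{g}(X_i)$ and $\omega(\boldsymbol{W}_i,\boldsymbol{\pi}_i,X_i)$ are conditionally independent given $X_i$, otherwise the factorization in the inner expectation fails and a bias term of the form $\E[\cov(\omega, \widehat{g}\mid X_i)]$ could appear. I would therefore state explicitly that the expectation is taken over the randomness in both folds and invoke the tower property with respect to the sigma-algebra generated by the held-out fold together with $X_i$, so that $\widehat{g}(X_i)$ is measurable and can be treated as constant when integrating over $\boldsymbol{W}_i \mid X_i$. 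With that care taken, the result follows; no appeal to the specific form of $\widehat{g}$ is needed, which is exactly the robustness claim.
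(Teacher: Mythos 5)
Your proposal is correct and follows essentially the same route as the paper: split off the correction term $(\omega-1)\widehat{g}(X_i)$, condition on $X_i$ (with $\widehat{g}$ treated as fixed by the cross-fitting independence assumption), and use $\E[\omega(\boldsymbol{W},\boldsymbol{\pi},X)\mid X]=1$ to make it vanish, reducing the expectation to that of the unbiased base estimator. Your added verification that $\omega^{\text{IPW}}$ and $\omega^{\beta\text{-IPW}}$ satisfy the weight condition likewise matches the paper's own remarks following the theorem.
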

\begin{proof}
We can show
\begin{align*}
    \E\left[\widehat\phi^{\text{augmented}}(\pi; \omega)\right] &= \E\left[\frac{1}{N} \sum_{i=1}^N \omega(\boldsymbol{W}_i, \boldsymbol{\pi}_i, X_i) (\tilde{Y}_i - \widehat{g}(X_i)) + \widehat{g}(X_i)\right] \\
    & = \E\left[\frac{1}{N} \sum_{i=1}^N \omega(\boldsymbol{W}_i, \boldsymbol{\pi}_i, X_i) \tilde{Y}_i\right] - \E\left[\frac{1}{N} \sum_{i=1}^N \Big(1-\omega(\boldsymbol{W}_i, \boldsymbol{\pi}_i, X_i)\Big)\widehat{g}(X_i)\right] \\
     & = \E\left[\frac{1}{N} \sum_{i=1}^N\omega(\boldsymbol{W}_i, \boldsymbol{\pi}_i, X_i) \tilde{Y}_i\right] - \E\left[\frac{1}{N} \sum_{i=1}^N\E\Big[1-\omega(\boldsymbol{W}_i, \boldsymbol{\pi}_i, X_i)\mid X_i \Big]  \widehat{g}(X_i)  \right] \\
     & = E\left[\frac{1}{N} \sum_{i=1}^N\omega(\boldsymbol{W}_i, \boldsymbol{\pi}_i, X_i) \tilde{Y}_i\right] \\
     &=  E\Big[\widehat\phi(\pi; \omega) \Big] = V(\pi)
\end{align*}
where the fourth equality follows from the assumption that $\E\big[\omega(\boldsymbol{W}_i, \boldsymbol{\pi}_i, X_i)\mid X_i \big]=1$ and that $\widehat{g}$ can be taken out of the inner expectation because it is independent of the observations used to compute its predictions. The final equality follows from that the weighted estimator $\widehat\phi(\pi; \omega)$ is an unbiased estimator.
\end{proof}

To illustrate how the above theorem applies to the standard IPW and $\beta$-IPW estimators, we first note that the standard IPW estimator is unbiased under Assumption~\ref{asmp:identification_conditions}, and that $\E[\omega^{\text{IPW}} (\boldsymbol{W}_i, \boldsymbol{\pi}_i, X_i)\mid X_i]=1$ as shown in the proof of its unbiasedness in Appendix~\ref{app:derivation_IPW}. Under the additional Assumption~\ref{asmp:additivity} and the study design considered in this paper, the $\beta$-IPW estimator is also unbiased, and we can similarly show that $\E[\omega^{\beta\text{-IPW}} (\boldsymbol{W}_i, \boldsymbol{\pi}_i, X_i)\mid X_i]=1$ holds as follows:
\begin{align*}
    \E[\omega^{\beta\text{-IPW}} (\boldsymbol{W}_i, \boldsymbol{\pi}_i, X_i)\mid X_i]&=\E\left[\sum_{\mathcal{U}\in\mathcal{I}_i^{\beta}} \prod_{j\in\mathcal{U}} \left( \frac{\mathbf{1}(W_{ij}=\pi_{ij})}{e_{\pi_{ij}}(X_i)} - 1\right) \mid X_i \right] \\
    &= \sum_{\mathcal{U}\in\mathcal{I}_i^{\beta}} \E\left[ \prod_{j\in\mathcal{U}} \left( \frac{\mathbf{1}(W_{ij}=\pi_{ij})}{e_{\pi_{ij}}(X_i)} - 1\right) \mid X_i \right] \\
    &= 1 + \sum_{\mathcal{U}\in\mathcal{I}_i^{\beta} \backslash \{\emptyset\}} \E\left[ \prod_{j\in\mathcal{U}} \left( \frac{\mathbf{1}(W_{ij}=\pi_{ij})}{e_{\pi_{ij}}(X_i)} - 1\right) \mid X_i \right] \\
    &= 1 + \sum_{\mathcal{U}\in\mathcal{I}_i^{\beta} \backslash \{\emptyset\}} \prod_{j\in\mathcal{U}}  \underbrace{\E\left[ \left( \frac{\mathbf{1}(W_{ij}=\pi_{ij})}{e_{\pi_{ij}}(X_i)} - 1 \right) \mid X_i \right]}_{=0}\\
    & = 1
\end{align*}
The third equality follows from explicitly excluding the empty set in the power set $\mathcal{I}i^{\beta}$ and noting that the product over the empty set is equal to one. The fifth equality follows from the fact that, conditional on $X_i$, the treatment $W{ij}$ of unit $(i,j)$ is independent of all other treatments $W_{ij'}$ for $j \neq j'$. This independence allows the product to be moved outside the conditional expectation.

\subsection{Experiment}

\paragraph{Setup} We use the same data-generating process as in the main paper, with $N = 20{,}000$ buyers (i.e., clusters) and $K = 11$ items (i.e., units) per buyer. The interference structure is determined using the exponential decay function. To estimate $\E[\tilde{Y} \mid X]$, we use a linear logistic regression model, as the cluster-level outcome $\tilde{Y}$ is binary. We report the bias, variance, and mean-squared error averaged over 150 repetitions.

\paragraph{Results} As shown in Table~\ref{tab:augmented_results}, the augmented methods may reduce variance compared to their non-augmented counterparts. For the standard IPW estimator, augmentation leads to a decrease in variance (from 0.737 to 0.601) and a lower mean-squared error (from 0.738 to 0.603). For the $\beta$-IPW estimator, the variance reduction is negligible, likely because its variance is already low. Overall, augmentation with an outcome model appears to improve estimator efficiency, with the largest gains observed in cases where the initial variance is high, such as with the standard IPW estimator.

\begin{table}[t]
    \centering
     \caption{Comparison of non-augmented and augmented estimators. We used $N=20.000$ number of buyers (i.e., clusters) and $K=11$ items (i.e., units) per buyer. We report the average bias, variance and mean-squared error with standard errors (in parentheses) over 150 repetitions.}
    \label{tab:augmented_results}
    \begin{tabular}{lrrr}
    \toprule
        Estimation strategy & Bias & Variance & MSE \\
        \midrule
        Standard IPW & -0.003 (0.012) & 0.737 (0.088) & 0.738 (0.032) \\
        Augmented Standard IPW & 0.005 (0.010) & 0.601 (0.068) & 0.603 (0.022) \\
        $\beta$-IPW ($\beta=1$) & -0.016 (0.001) & 0.005 (0.001) & 0.005 (0.000) \\
        Augmented $\beta$-IPW ($\beta=1$) & -0.016 (0.001) & 0.004 (0.001) & 0.004 (0.000) \\
        \bottomrule
    \end{tabular}
\end{table}

\section{Estimating Qini curves in settings with no interference} \label{app:estimation_no_interference}
We assume the following statement which is equivalent to assuming no interference.
\begin{thmasmp}\label{asmp:no_interference}
    We assume that $Y_{ij}(\mathbf{w})=Y_{ij}(\mathbf{w}')$ if and only if $w_{ij}=w_{ij}'$ for all $\mathbf{w}, \mathbf{w}'\in \{0,1\}^{M_i}$.
\end{thmasmp}
Note that the above assumption is a special case of Assumption~\ref{asmp:fractional} with the exposure mapping $d_{ij}(\mathbf{W}_i)=W_{ij}$.

We consider the simplest approach in the absence of interference for estimating Qini curves between any units. Consider the estimators based on inverse probability weighting,
\begin{align*}
    \widehat{\phi}^{\text{no-interference}}(\pi) = \frac{1}{N} \sum_{i=1}^N \sum_{i=1}^{M_i} \frac{\mathbf{1}(W_{ij}=\pi_{ij})}{e_{\pi_{ij}}(X_i)} Y_{ij} \;\;\text{and}\;\;
    \widehat{\psi}^{\text{no-interference}}(\pi) = \frac{1}{N} \sum_{i=1}^N \sum_{i=1}^{M_i} \frac{\mathbf{1}(W_{ij}=\pi_{ij})}{e_{\pi_{ij}}(X_i)} C_{ij}~.
\end{align*}
We can show that under Assumption~\ref{asmp:identification_conditions} and~\ref{asmp:no_interference}, the above estimators are unbiased estimators for the policy value $V(\pi)$ and $C(\pi)$, respectively. Namely, we can show this with the same proof as in Appendix~\ref{app:proof_fracIPW}, but replacing $d_{ij}(\mathbf{W_i})=[W_{ij}, \overline{W}_i]$ with $d_{ij}(\mathbf{W_i})=W_{ij}$.

\section{Experimental details} \label{app:experiments}

\subsection{Simulating marketplace dataset} \label{app:simulation_details}

We sample the covariates and treatment as follows: For each buyer $i=1,\dots,N$, we sample covariates $X_i\sim \text{U}([0,1]^{12})$. Then, for each item $j=1,\dots, M$ we sample covariates $Z_{ij}\sim \text{U}([0,1]^{11})$ and we randomize the treatment assignment by sampling $W_{ij}\sim \text{Bern}(0.5)$. Here, $M$ is the same for all buyers.  Since we consider the uniform cost case, we need not sample cost of treatment since they are assumed to be the same each for item.

Next, to introduce heterogeneous treatment effects, we compute an item attractiveness score matrix $\mathbf{A}$ where element $A_{ij}$ relates buyer $i$'s interest in purchasing item $j$. This matrix depends on both the covariates and treatment as follows 
\begin{equation} \label{eq:attractivness}
    A_{ij}=\delta_{ij}\cdot(A_{ij}^{(0)} + W_{ij} \cdot A_{ij}^{(1)})~,
\end{equation}
where $A_{ij}^{(w)}=X_i^\top \Omega_w Z_{ij}$ with $\Omega_w \sim \text{U}([0,1]^{12\times 11})$ for $w\in\{0,1\}$. The variable $d_{ij}\sim\text{Bern}(0.5)$ randomly masks some elements in $A_{ij}^{(1)}$ to zero; this emulates that some items will not respond at all to a treatment. Here,  $A_{ij}$ typically lies in the range $[0,1]$, but if necessary we clip it to this range so that we later could interpret it as a probability. 

\paragraph{Simulating revenue and cost of treating items}
In our simulations, we also compute the revenue and cost of items in the marketplace if they were to be treated. This allows us to evaluate treatment policies within the simulator that prioritize items based on their predicted expected profit. Specifically, we define the price and margin fraction of item $(i,j)$ as simple linear functions of its covariates $Z_{ij}$: the price is $P_{ij} = 20 + 100 \cdot Z_{ij,1}$ and the margin fraction is $M_{ij} = 0.01 + 0.05 \cdot Z_{ij,2}$, where $Z_{ij,1}$ and $Z_{ij,2}$ denote the first and second elements of $Z_{ij}$, respectively. If the treatment $W_{ij} = 1$ corresponds to applying a fixed discount fraction of $d = 0.08$, then the revenue of treating item $(i,j)$ is $R_{ij} =M_{ij} \cdot  P_{ij}$ and the cost of treating item $(i,j)$ is $C_{ij} = d \cdot P_{ij}$. Consequently, the potential profit (or loss) from treating the item, conditional on it being converted, is $H_{ij} = R_{ij} - C_{ij} = (M_{ij} - d) \cdot P_{ij}$.

\subsection{Treatment policy used for evaluation} \label{app:treatment_policy}

To construct a simple baseline policy for our simulation studies, we define a treatment prioritization scoring rule $S_{\text{baseline}}$ that computes the expected profit from treating item $(i,j)$, ignoring the interference. Since our focus is on evaluating policies rather than developing them, this simplification avoids the complexity of implementing a policy that accounts for interference. 

We then compute the scoring rule as
$$
S_{\text{baseline}}(X_i, Z_{ij}) = A_{ij}^{(1)} \cdot H_{ij}.
$$
Here, $A_{ij}^{(1)}$ represents the incremental change in the probability of conversion after an item has been treated (ignoring interference), while $H_{ij}$ denotes the profit from treating item $(i,j)$. The dependence on the covariates $(X_i, Z_{ij})$ arises through both $A_{ij}^{(1)}$ and $H_{ij}$, as described in the previous subsection.

To generate treatment policies with varying performance, we introduce a parameter $\epsilon \in [0,1]$ used to perturb the baseline scoring rule. Specifically, we sample noise from a uniform distribution, $u \sim \mathcal{U}(S_{\text{min}}, S_{\text{max}})$, where $S_{\text{min}}$ and $S_{\text{max}}$ are the minimum and maximum observed values respectively of the scoring rule $S_{\text{baseline}}(X_i,Z_{ij})$. We then define a perturbed policy as 
$$S_{\text{baseline}}(X_i,Z_{ij}; \epsilon) = (1 - \epsilon)\cdot S_{\text{baseline}}(X_i,Z_{ij}) + \epsilon \cdot u~.$$
The best-performing policy corresponds to $S_{\text{baseline}}(X_i,Z_{ij}; \epsilon=0)$, while $S_{\text{baseline}}(X_i,Z_{ij}; \epsilon=1)$ is equivalent to a completely random policy.

\subsection{Hardware used for experiments} \label{app:hardware_reqs}
All experiments have been performed on M1 Macbooks with 16GB of RAM. The runtime is highly dependent on the number of samples being simulated, but anything under 1 million samples takes not more than 10 minutes. The total runtime of our experiments, including all variants required for uncertainty estimation, took about 3 days worth of compute time. 

\subsection{Details on experiment shown in Figure~1} 

We simulate a dataset with $N=20.000$ buyers (i.e., clusters) and $K=3$ items (i.e., units)  per buyer with the exponential decay function. We estimate the Qini curve for the treatment prioritization rule $S_{\text{baseline}}(X_i,Z_{ij}; \epsilon=0)$ described in Appendix~\ref{app:treatment_policy} using the naive estimator and the standard IPW estimator. This was repeated 150 times and we plotted the average Qini curve.

\subsection{Additional experimental results} \label{app:additional_results}

In this section, we present additional experimental results to support the conclusions in Section~\ref{sec:experiments}. Figure~\ref{fig:variance_scaling} shows the variances of the estimated Qini curves as we vary either the number of buyers $N$ (i.e., clusters) or items $M$ (i.e., units). Meanwhile, Table~\ref{tab:kendall_tau_rank_correlation} reports the average difference in AUC between the estimated Qini curves and the ground-truth curves over multiple repetitions. These results correspond to the setting with $N=100{,}000$ and $M=11$, which is the same as for the Qini curve plots shown in Figure~\ref{fig:qini_curve_qualitative} in the main~paper.

\newpage
\begin{figure}[t]
    \centering
    \includegraphics[width=0.7\linewidth]{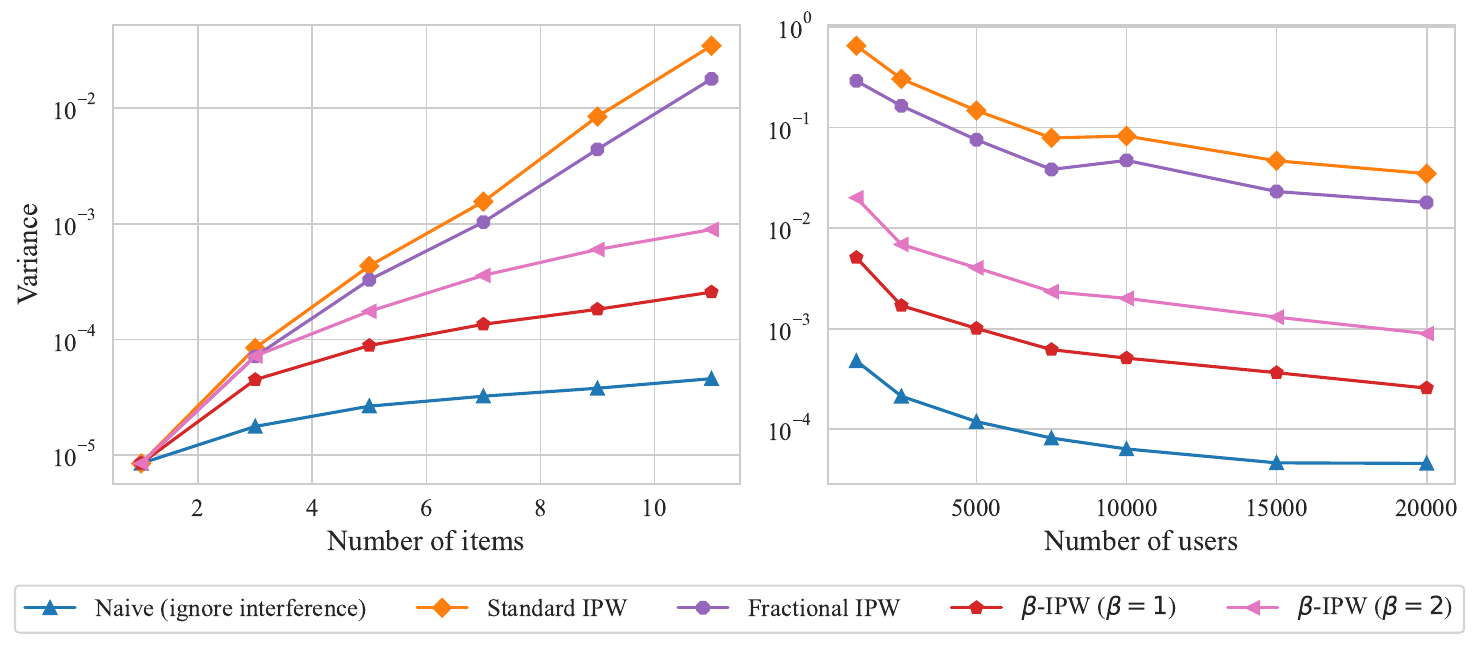}
    \caption{Comparing variance of each estimation strategy as we vary the number the buyers $N$ (i.e., clusters) or items $M$ (i.e., units) with $\eta_{\text{exp-decay}}$. While varying one, the other is kept fixed to either $N=20.000$ or $K=11$. The variance is reported over 150 repetitions.}
    \label{fig:variance_scaling}
\end{figure}

\begin{table}[t]
    \caption{Comparison of estimation strategies via the Area Under the Curve (AUC) relative to the ground-truth Qini curve. Each entry reports the mean AUC and its standard error (in parentheses) across 150 repetitions. We let $N=100.000$ and $M=11$. }
    \label{tab:qini_curve_qualitative_auc}
    \centering
   \begin{tabular}{lccccc}
        \toprule
        {Interference structure}& Naive & Standard IPW & Fractional IPW & $\beta$-IPW ($\beta=1$) & $\beta$-IPW ($\beta=2$) \\
        \midrule
        Max function      & 0.11 (0.02) & 0.00 (0.05) & 0.13 (0.04) & -0.44 (0.02) & -0.28 (0.02) \\
        Product function  & 2.50 (0.03) & -0.04 (0.07) & 0.62 (0.06) & -0.05 (0.02) & -0.00 (0.03) \\
        Exponential decay & 0.17 (0.02) & 0.01 (0.05) & 0.11 (0.04) & -0.18 (0.02) & -0.07 (0.02) \\
        \bottomrule
    \end{tabular}
\end{table}
\end{document}